\documentclass[sigconf,nonacm]{acmart}

\usepackage{amsmath}
\usepackage{mathtools}
\usepackage{amsfonts}

\usepackage{amssymb}
\usepackage{natbib}
\usepackage{graphicx}
\usepackage{booktabs}

\usepackage{tikz}
\usetikzlibrary{arrows.meta,positioning}

\graphicspath{{Figs/}}

\newtheorem{assumption}{Assumption}
\newtheorem{proposition}{Proposition}
\newtheorem{remark}{Remark}
\newtheorem{definition}{Definition}

\setcopyright{none}
\renewcommand\footnotetextcopyrightpermission[1]{}
\begin{document}

\title{Statistical Foundations of LLM-based A/B Testing: \\ A Surrogacy Framework for Human Causal Inference}

\author{Joel Persson}
\authornote{Corresponding author.}
\affiliation{%
  \institution{Spotify USA, Inc.}
  \city{New York}
  \state{NY}
  \country{USA}
}
\email{joelpersson@spotify.com}

\author{Mårten Schultzberg}
\affiliation{%
  \institution{Spotify AB}
  \city{Stockholm}
  \country{Sweden}
}
\email{mschultzberg@spotify.com}

\author{Sebastian Ankargren}
\affiliation{%
  \institution{Spotify AB}
  \city{Stockholm}
  \country{Sweden}
}
\email{sebastiana@spotify.com}

\begin{abstract}
Organizations and researchers show increasing interest in using large language models (LLMs) in place of human participants in A/B tests, in the hope of experimenting faster and at lower cost. We study when a treatment effect estimated on LLM outcomes can recover the effect for the human population of interest. Distributional equivalence between LLM and human outcomes would make any standard estimator valid but is unrealistic. We therefore develop a statistical framework that adapts surrogate endpoint theory to LLMs, showing that calibrating LLM outcomes to human outcomes identifies the average treatment effect under surrogacy and comparability conditions that are jointly weaker than distributional equivalence. We present a falsification test for surrogacy and a bound on the worst-case bias from limited overlap between the LLM and human samples. We further show that the stochasticity inherent to LLMs can weaken surrogacy for identification while also introducing bias and variance during estimation, but that using an average over multiple LLM draws per unit as the surrogate mitigates these issues. Simulations validate the results, and an empirical application to the Upworthy Research Archive dataset shows that raw LLM outputs recover only 39\% of the human treatment effect while nonparametric calibration closes the gap. A central takeaway is that A/B testing on LLM responses is correct only by assumption, whereas A/B testing on humans is correct by design, and that the required assumptions are hardest to justify precisely where LLMs promise the greatest benefit. We discuss the choice of LLM, prompting, and temperature as design variables, the compounded challenge posed by long-term outcomes, and how to size human pilot studies for validation.
\end{abstract}

\keywords{A/B testing, causal inference, large language models, surrogate endpoints, experimental design}

\maketitle
\thispagestyle{plain}

\section{Introduction}
Large language models (LLMs) increasingly serve as proxies for human participants in research, offering the promise of rapid hypothesis testing and cost-effective experimentation. This shift toward AI-mediated research presents both unprecedented opportunities and fundamental challenges for causal inference. On the one hand, LLMs appear to enable the simulation of arbitrarily large datasets and even unit-level counterfactual outcomes, seemingly sidestepping classical challenges in causal inference. However, such simulations will tend to recover treatment effects for the LLM-generated outcomes themselves, which need not coincide with treatment effects for the human populations of interest. This distinction gives rise to a central question: under what conditions can outcomes generated by LLMs be used to identify treatment effects for the human populations of interest?

In this paper, we study this question from the lens of surrogate-endpoint theory \citep{prentice1989surrogate}, by viewing LLM-generated outcomes as noisy surrogates for human outcomes and asking when a treatment effect measured on LLM samples recovers the human treatment effect of interest. Our goal is to make explicit what conditions must hold for experimentation with LLM outcomes to yield the intended results, resulting in clear and principled recommendations for practice. In what follows, we first situate our work in the literature on LLMs as stand-ins for humans and then the theory it builds on.

Recent work demonstrates that LLM-based agents can partially replicate human behavior with notable accuracy in specific settings, suggesting that they may serve as useful proxies in experiments. Theoretically grounded ``general'' agents, i.e., LLM personas guided by social-science theory and optimized on related data, predict human behavior in previously unseen strategic environments, often outperforming simple game-theoretic benchmarks and, in some cases, even surpassing prior human datasets for prediction \citep{manning2025general}. GPT-4 style models forecast outcomes of social-science experiments with high accuracy, with model-based forecasts strongly correlating with realized effects across many studies, including some conducted after the model's training cutoff \citep{hewitt2024predicting}. Simulated multi-human samples replicate a range of classic findings from economics and psychology, though not without systematic distortions that deviate from human patterns \citep{aher2023using}. Adjacent domains show similar promise. Persona-conditioned LLM panels emulate professional forecasters with distributional properties close to human panels and, in some cases, improved accuracy \citep{hansen2024spf}. Large-scale fine-tuning on survey responses improves alignment with real public-opinion distributions and generalizes to new surveys and demographics \citep{suh2025publicopinion}.

However, important limitations emerge when moving from prediction to causal inference and from narrow tasks to broad generalization. Even in simple behavioral games, state-of-the-art prompting or fine-tuning fails to reproduce the empirical distribution of human choices, yielding overly peaky or stereotyped response profiles that diverge from human data \citep{gao2025scylla}. From a design perspective, using LLM ``participants'' in experiments introduces novel types of confounding: concealment of treatment/control to preserve human-like blindness may cause the model to miss the true treatment channel, while revealing design details can make the model over-index on artifacts of the setup, undermining realism \citep{gui2023challenge}. Recent methodological work in social sciences develops frameworks for combining imperfect LLM annotations with smaller samples of high-quality human labels to maintain valid statistical inference, an approach that shows promise but requires careful calibration and validation \citep{egami2023imperfect}.

The balance of evidence suggests a nuanced role for LLM agents in experimental pipelines. While theory-guided prompting and principled optimization can improve LLMs' predictive accuracy and generalization \citep{manning2025general}, and domain-specific tailoring can bring simulated distributions closer to human ground truth \citep{hansen2024spf,suh2025publicopinion}, well-documented sources of bias (such as stereotyped variability, prompt sensitivity, and causal inference challenges) underscore that current LLMs remain, at best, partial proxies whose validity depends on the domain, outcome, and design \citep{gao2025scylla,gui2023challenge,egami2023imperfect}.

AgentA/B \citep{agentab2024} demonstrates a scalable agent-based A/B test on a live website (Amazon.com), assigning 1,000 LLM agents to treatment and control variants. The system produces directional alignment with human behavior, in that LLM agents detect subtle design differences (such as changes to filter panel layouts) and generate measurably different outcomes across treatment and control groups. However, it also exhibits systematic differences to humans in terms of exploration and efficiency. This illustrates both the promise of LLM-based experimentation and the central challenge: while LLMs can dramatically reduce both experimental costs and timeline requirements compared to traditional A/B testing, it remains unclear under what statistical conditions LLM responses can be expected to reliably identify treatment effects for the human populations of interest. A surrogacy framework, which treats the LLM outcome as an imperfect stand-in for the human outcome, gives a way to make these conditions precise.

We formalize this surrogacy view and derive the conditions under which treatment effects for humans can be reliably identified and estimated from LLM-based experiments. In Section~\ref{sec:setup}, we introduce the experimental setup and the LLM surrogate as a stochastic mapping. We clarify that perfect replication of human behavior (distributional equivalence between LLM and human outcomes) is sufficient but not necessary for identification. Section~\ref{sec:framework} then develops the surrogacy framework, showing that point-identification can be achieved through a weaker calibration relationship between LLM-generated and human outcomes, provided that certain surrogacy and comparability assumptions hold. Section~\ref{sec:estimation_stochastic} then studies estimation, including the consequences of stochasticity in LLM outputs. We show that using just a single draw per unit can undermine surrogacy and will lead to attenuation bias and variance inflation, but that averaging over multiple draws mitigates all three problems. Our approach to identification and estimation is inspired by the surrogate index approach of \citet{athey2019using} and related methods \citep{athey2025usingexperimentscorrectselection, kallus2020role}, which we adapt to the context of LLM-generated outcomes, where the surrogate is something the experimenter generates and controls rather than passively observes. We also draw on classical errors-in-variables results \citep{fuller1987measurement} to arrive at the consequences of stochasticity in LLMs. In Section~\ref{sec:evaluation}, we propose how to check the quality of the identification and estimation, resulting in a moment-condition test of the surrogacy relationship and a theoretical and estimable bound that quantifies the worst-case bias in the treatment effect as a consequence of violations to distributional overlap between LLM and human samples. Section~\ref{sec:simulation} validates our theoretical results via simulation studies, whereas in Section~\ref{sec:empirical}, we apply the framework to real data from the Upworthy Research Archive, finding that raw LLM predictions recover only 39\% of the human treatment effect while nonparametric calibration closes the gap. In Section~\ref{sec:design}, we provide practical guidance on the design and deployment of LLM experiments, including the role of prompting, temperature, long-term outcomes, and pilot studies. Section~\ref{sec:discussion} concludes.

A main takeaway of our work is to highlight fundamental limitations: while necessary assumptions can be partially assessed using historical data, identification of causal effects for new interventions cannot, in general, be empirically verified without human outcomes. We interpret this to mean that human verification remains indispensable for novel inferences.

\section{Setup}\label{sec:setup}

We consider randomized experiments with two arms. Let $W\in\{0, 1\}$ denote treatment assignment with $W=0$ indicating assignment into the control group, $X$ pre-treatment covariates, and $Y$ the outcome. We use the potential outcomes framework \citep{Neyman1923_EnglishTrans1990, Rubin1974} and let $Y(W)$ be the outcome under treatment status $W$. The estimand of interest is the average treatment effect (ATE), defined as
\begin{equation}
    \tau \coloneqq \mathbb{E}[Y(1) - Y(0)].
\end{equation}
Given observations of the tuple $(W, X, Y)$ and randomized treatment assignment, the ATE is identified. The aim of this paper is not to identify $\tau$ from human outcomes, but to do so when the human outcome $Y$ is augmented or replaced by an outcome generated by a large language model (LLM) in response to the same inputs $(W, X)$. This requires notation for how an LLM produces such an outcome, which we introduce next.

An LLM generates its output by stochastically sampling a token sequence under a decoding configuration $\mathcal{D}$ such as the sampling temperature \citep{holtzman2020curious}, and the outcome $Y^*$ is obtained by post-processing this sequence into a numerical or behavioral response. Because the sampling is stochastic, $Y^*$ is a random variable whose distribution is determined by the model $M$, the prompt $I$, the decoding configuration $\mathcal{D}$, and the conditioning inputs $(W, X)$.

\begin{definition}[LLM surrogate]
For a fixed model $M$, prompt template $I$, and decoding configuration $\mathcal{D}$, the LLM surrogate is a random variable
\begin{equation}\label{eq:Ystar}
    Y^* \sim F_{M,I,\mathcal{D}}(\cdot \mid W, X).
\end{equation}
Throughout, $(M, I, \mathcal{D})$ are assumed fixed and omitted from the notation. We write $Y^* \mid (W, X) \sim F(\cdot \mid W, X)$.
\end{definition}

\begin{remark}[Deterministic special case]\label{rem:deterministic}
When $F(\cdot \mid W, X)$ is a point mass at some value $f(W, X)$, as arises under greedy decoding or zero temperature, the surrogate reduces to the deterministic specification $Y^* = f(W, X)$. All results in this paper also cover this case. The deterministic setting is thus nested within our general framework rather than treated as a separate modeling choice.
\end{remark}

Given access to an LLM, an LLM-based A/B test uses the following procedure:
\begin{enumerate}
    \item Instruct the LLM to adopt characteristics given by $X$.
    \item Let the LLM interact with the environment/scenario corresponding to its assigned condition ($W=0$ or $W=1$).
    \item Record the outcome $Y^*$, which is a draw from $F(\cdot \mid W, X)$.
\end{enumerate}

By repeating this procedure as many times as desired, the experimenter obtains an \emph{artificial sample} of tuples $(W, X, Y^*)$. Throughout, we assume access to two samples, drawn from populations indexed by an indicator $P$:
\begin{itemize}
\item the \emph{experimental sample} ($P=0$), consisting of human data $(W, X, Y^*, Y)$ in which both the surrogate and the human outcome are observed; and
\item the \emph{artificial sample} ($P=1$),\allowbreak{} consisting of LLM data $(W, X, Y^*)$ in which only the surrogate is observed.
\end{itemize}
In both samples treatment is independently randomized and SUTVA holds, by construction in the artificial sample and by design in the experiment. We hold $(M, I, \mathcal{D})$ fixed across the two samples, so that the conditional distribution $Y^* \mid (W, X) \sim F(\cdot \mid W, X)$ is identical under $P=0$ and $P=1$, even though the realized values of $Y^*$ differ across samples, and across draws within a sample, because the LLM is stochastic. We further assume that the marginal distribution of $X$ is the same under $P=1$ and $P=0$, so that the artificial sample represents the same covariate population as the human experiment. This is trivially satisfied when the LLM is prompted with user profiles drawn from the same population as the human experiment, as is the standard practice in LLM-based A/B testing.

The experimenter may also generate $K \geq 1$ independent draws from the LLM for each unit, yielding a collection of synthetic responses $Y^*_1, \ldots, Y^*_K$ where each $Y^*_k \mid (W, X) \overset{\text{iid}}{\sim} F(\cdot \mid W, X)$. We refer to $K$ as the \emph{replication count}. When $K = 1$ or the LLM is deterministic, the multiple draws reduce to a single surrogate outcome per unit.

Our central research question in this setting is: under what assumptions can we use the artificial sample $(W, X, Y^*)$ to infer the human ATE $\tau$ under $P=0$?

\section{A Surrogacy Framework for LLM-based A/B Testing}\label{sec:framework}

We now introduce our framework. We first cover the ideal but unrealistic case that the LLM responses perfectly replicate those of humans and then consider the realistic case that they do not, which leads us to our theoretical results and methods.

\subsection{The ideal case: perfect substitution}\label{sec:perfect}

The ideal scenario of LLM-based A/B testing is that the observed LLM outcomes are distributionally the same as human outcomes, 
\begin{equation}\label{eq:utopia}
    Y^* \stackrel{d}{=} Y(W) \mid X,
\end{equation}
or, minimally, that $\mathbb{E}[Y^* \mid X] = \mathbb{E}[Y(W)\mid X]$.

If this assumption holds, an LLM-generated sample is indistinguishable from a human sample, and arguments for identification and unbiased estimation of the ATE apply. However, this assumption is questionable in most cases. Even an LLM trained on user behavior may prioritize modeling the process (i.e., the sequence of actions) rather than ensuring that the final outcome $Y^*$ has the correct distribution or a functional thereof.

A natural and more principled approach to achieving this is to explicitly train or fine-tune the LLM so that the distribution of LLM responses matches the distribution of the human potential outcomes, or, if interest lies in the ATE, aligns in expectation. This can be viewed as minimizing a suitable divergence between the conditional distributions of $Y^*$ and $Y$ given $(W, X)$. Such approaches are closely related to recent work on aligning LLM outputs to human data via reinforcement learning from human feedback \citep{christiano2017deep, ouyang2022training}, or more general distribution-matching objectives \citep{yun2025alignment}.

Achieving such alignment in practice is still challenging and may not be possible to the desired degree of accuracy across different contexts where the LLM would substitute for human experiments (e.g., interventions on a homepage, in search, and across mobile and desktop users). As a result, the approach can be overly sensitive to the underlying data-generating process and the amount and quality of human data available for fine-tuning.

In the absence of perfect substitution between LLMs and human subjects, one can treat LLM outputs $Y^*$ as surrogate outcomes for the true human outcomes. Viewing LLM output from this lens makes it easier to reason about and build intuition for the requirements for replacing humans with LLMs.

\subsection{Identification: surrogacy and comparability}\label{sec:surrogacy}

We view $Y^*$ as a surrogate for $Y$ and ask when the ATE remains identified from $Y^*$ alone. Identification rests on two assumptions, surrogacy and comparability, together with a calibration function learned on human data. Figure~\ref{fig:assumptions} illustrates the two assumptions as causal diagrams. We then state each assumption formally, explain it, and give the identification result.

\begin{figure}[t]
\centering
\begin{tikzpicture}[
  >=Stealth, node distance=8mm and 12mm,
  nd/.style={circle, draw, inner sep=1pt, minimum size=8mm},
  arr/.style={->, thin},
  barr/.style={->, thin, dashed, gray}
]
  \node[nd] (W) {$W$};
  \node[nd, right=of W] (Ys) {$Y^*$};
  \node[nd, right=of Ys] (Y) {$Y$};
  \node[nd, below=of Ys] (X) {$X$};
  \draw[arr] (W) -- (Ys);
  \draw[arr] (Ys) -- (Y);
  \draw[arr] (X) -- (Ys);
  \draw[arr] (X) -- (Y);
  \draw[barr] (W) to[bend left=28] node[above, font=\scriptsize] {blocked} (Y);
  \node[anchor=south west, font=\small] at ([yshift=0.5mm]current bounding box.north west) {(a)};
\end{tikzpicture}

\vspace{2mm}

\begin{tikzpicture}[
  >=Stealth, node distance=8mm and 30mm,
  nd/.style={circle, draw, inner sep=1pt, minimum size=8mm},
  bx/.style={draw, rounded corners=2pt, inner sep=4pt},
  arr/.style={->, thin}
]
  \node[bx] (P0) {$P=0$};
  \node[bx, right=of P0] (P1) {$P=1$};
  \node[nd, below=of P0] (mu0) {$\mu$};
  \node[nd, below=of P1] (mu1) {$\mu$};
  \draw[arr] (P0) -- (mu0);
  \draw[arr] (P1) -- (mu1);
  \draw[<->, thin] (mu0) -- node[above, font=\scriptsize] {same} (mu1);
  \node[anchor=south west, font=\small] at ([yshift=0.5mm]current bounding box.north west) {(b)};
\end{tikzpicture}

\caption{Causal diagrams of the two identifying assumptions. Panel (a) shows surrogacy, under which $Y^*$ and $X$ fully mediate the effect of $W$ on $Y$, so that the direct path from $W$ to $Y$ (dashed) is blocked. Panel (b) shows comparability, under which the calibration function $\mu$ is the same in the experimental population $P=0$ and the artificial population $P=1$.}
\label{fig:assumptions}
\end{figure}

\begin{assumption}[Surrogacy, Prentice criterion]\label{as:surrogacy}
Conditional on the surrogate $Y^*$ and covariates $X$, treatment has no residual effect on $Y$:
$$Y \perp W \mid X, Y^*.$$
\end{assumption}

Surrogacy says that $Y^*$ fully mediates the effect of $W$ on $Y$. In other words, conditional on $(X, Y^*)$, the treatment status carries no further information about $Y$; see Figure~\ref{fig:assumptions}(a). The surrogate then stands in for $Y$, and any estimand of $Y$ can in principle be identified from $Y^*$. If only the ATE is of interest, the assumption can be weakened to the mean-independence condition $\mathbb E[Y \mid W, X, Y^*] = \mathbb E[Y \mid X, Y^*]$ \citep{athey2019using}. Whether surrogacy is plausible depends on how the LLM is built and trained, which we take up as a design choice in Section~\ref{sec:design}.

Surrogacy specifies what information $Y^*$ must carry, but not how to recover the human outcome from it. For the ATE, this amounts to learning the conditional mean of $Y$ given $(X, Y^*)$, i.e., the calibration function
\begin{align}\label{eq:calibration}
\mu(x, y^*) \coloneqq \mathbb{E}[Y \mid X=x, Y^*=y^*, P=0].
\end{align}
This function is to be estimated on the human sample $P=0$ from $(W, X, Y, Y^*)$ and evaluated on the artificial sample $P=1$. How well this identifies the ATE then depends on a second assumption.

\begin{assumption}[Comparability]\label{as:comparability}
The conditional distribution of $Y$ given $(X, Y^*)$ is stable across the human and artificial sample:
$$Y \perp P \mid X, Y^*,$$
with support overlap in $(X, Y^*)$.
\end{assumption}

Assumption~\ref{as:comparability} requires that $\mu(x, y^*)$ is invariant across $P$. Intuitively, it rules out any change in the relationship between $Y^*$ and $Y$ across the human and artificial samples. This invariance is shown in Figure~\ref{fig:assumptions}(b).

Under Assumptions~\ref{as:surrogacy}--\ref{as:comparability}, the ATE is identified from LLM data, following the analogous identification result in \citet{athey2019using}:

\begin{theorem}[Identification under surrogate transport]\label{thm:identification}
Under Assumptions~\ref{as:surrogacy}--\ref{as:comparability},
\begin{equation}\label{eq:identification}
    \tau
    = \mathbb E \left[ \mu(X, Y^*) \mid P=1, W=1 \right] - \mathbb E \left[ \mu(X, Y^*) \mid P=1, W=0 \right]
    .
    \end{equation}
\end{theorem}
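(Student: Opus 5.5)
The plan is to work arm by arm and show that for each $w\in\{0,1\}$,
\[
\mathbb E[Y(w)] = \mathbb E[\mu(X,Y^*)\mid P=1, W=w].
\]
The identity \eqref{eq:identification} then follows by taking differences. The estimand $\tau$ refers to the human population $P=0$, so I would start from randomization in the experimental sample. Since $W$ is independent of $(Y(0),Y(1),X)$ within $P=0$ and SUTVA holds, we have
\[
\mathbb E[Y(w)\mid P=0] = \mathbb E[Y\mid W=w,P=0].
\]
Next I would apply the tower property over $(X,Y^*)$ within the arm:
\[
\mathbb E[Y\mid W=w,P=0] = \mathbb E\bigl[\mathbb E[Y\mid W=w,X,Y^*,P=0]\mid W=w,P=0\bigr].
\]

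The key step is to strip $W$ from the inner conditional mean. Surrogacy (Assumption~\ref{as:surrogacy}), read in the experimental population, gives $\mathbb E[Y\mid W=w,X,Y^*,P=0]=\mu(X,Y^*)$. Only the mean-independence version is needed here. This yields $\mathbb E[Y(w)]=\mathbb E[\mu(X,Y^*)\mid W=w,P=0]$.

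It remains to move the outer expectation from $P=0$ to $P=1$. The joint law of $(X,Y^*)$ given $W=w$ factorizes as the law of $X$ given $W=w$ times $F(\cdot\mid w,X)$. Randomization in both samples makes the law of $X$ given $W=w$ equal to the marginal of $X$. The setup assumes that this marginal and the kernel $F$ are identical across $P$. Hence $(X,Y^*)\mid W=w,P=0$ and $(X,Y^*)\mid W=w,P=1$ have the same distribution, and
\[
\mathbb E[\mu(X,Y^*)\mid W=w,P=0] = \mathbb E[\mu(X,Y^*)\mid W=w,P=1].
\]
Comparability (Assumption~\ref{as:comparability}) enters to guarantee that $\mu$, which is estimated on $P=0$, is the correct conditional mean of $Y$ on $P=1$. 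The overlap condition ensures $\mu$ is well defined on the support of $(X,Y^*)$ under $P=1$. An alternative route goes through $\mathbb E[Y(w)\mid P=1]$ using comparability plus surrogacy under $P=1$. I would mention it as a remark, since under the stated setup the two coincide.

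The main obstacle is bookkeeping rather than depth. First, $Y$ is not observed in the artificial sample, so one must be explicit about which population $\tau$ and surrogacy refer to, and about the fact that surrogacy is applied under $P=0$. Second, one must check that the equal-distribution argument for $(X,Y^*)$ given $W$ uses only the standing assumptions: randomization, a common $X$ marginal, and a common $F$. Measure-theoretic care is also needed to ensure conditional expectations are evaluated on a common support. I would handle this by invoking the overlap clause of Assumption~\ref{as:comparability}, so that $\mu$ is defined $P=1$-almost surely.
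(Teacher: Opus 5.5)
Your proposal is correct and follows essentially the same route as the paper: randomization and SUTVA in $P=0$, surrogacy to replace the inner conditional mean by $\mu(X,Y^*)$, and transport to $P=1$ using the common covariate marginal, the common kernel $F$, and randomization in $P=1$. The only difference is bookkeeping. The paper conditions on $X$ first and marginalizes at the end, while you work directly with the joint law of $(X,Y^*)$ given $W=w$; as a result, your transport step rests only on that distributional equality, whereas the paper also cites comparability at that point.
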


\begin{proof}
By SUTVA and randomisation of $W$ in $P=0$, we have
$\mathbb{E}[Y(w) \mid X] = \mathbb{E}[Y \mid X, W\!=\!w, P\!=\!0]$.
Applying iterated expectations under Assumption~\ref{as:surrogacy},
\begin{align}
    \mathbb E[Y(w) \mid X]
    &= \mathbb E\!\left[\mathbb{E}[Y \mid X, Y^*]
      \,\big|\, X, W\!=\!w, P\!=\!0\right] \notag \\
    &= \mathbb E\left[\mu(X, Y^*) \mid X, W\!=\!w \right],
\end{align}
where the last equality uses Assumption~\ref{as:comparability} to drop
the conditioning on $P\!=\!0$, as $\mu(X,Y^*)$ is invariant across $P$, and the conditional distribution of $Y^* \mid (X,W)$ is the same under $P=0$ and $P=1$ (c.f.\ Section~\ref{sec:setup}). Marginalising over $X$ and using randomisation of $W$ also in
$P=1$ gives
$\mathbb{E}[\mu(X, Y^*) \mid W\!=\!w, P\!=\!1] = \mathbb E[Y(w)]$.
\end{proof}

Theorem~\ref{thm:identification} imposes a weaker condition on the LLM than perfect substitution (c.f. Eq.~\eqref{eq:utopia}). The calibration function $\mu$ absorbs any difference in scale or level between the surrogate and the human outcome, so the LLM need not reproduce human outcomes in absolute terms. The only requirement is that the treatment effect is mediated by $Y^*$ through surrogacy, and that the learned mapping $\mu$ transports from the experimental to the artificial population through comparability. As such, learning $\mu$ reduces to a standard regression (supervised learning) task of learning the mapping from $(X, Y^*)$ to $Y$ on $P=0$, while estimating the ATE reduces to a standard out-of-sample prediction task of evaluating that regression function on $P=1$. Best practices from the literature can be used to improve both steps, such as flexible machine-learning estimation and cross-fitting.

The natural sample analog of \eqref{eq:identification} is the plug-in estimator
\begin{equation}\label{eq:ate_estimator}
    \hat{\tau} = \frac{1}{n_1}\sum_{i:\,W_i=1,\,P_i=1}\hat\mu(X_i, Y^*_i) \;-\; \frac{1}{n_0}\sum_{i:\,W_i=0,\,P_i=1}\hat\mu(X_i, Y^*_i),
\end{equation}
where $n_w$ is the number of artificial-sample units with $W_i = w$. The cross-fitting of $\hat\mu$ recommended above ensures that its estimation error is asymptotically negligible \citep{chernozhukov2018double}.

\subsection{The role of stochasticity in LLM surrogates}\label{sec:estimation_stochastic}

What sets LLM surrogates apart from classical ones is that the experimenter may control their stochasticity as well as being able to draw a much larger number of them per unit. That is, the same inputs $(W, X)$ produce a distribution of outputs $F(\cdot \mid W, X)$ rather than a single value, and the experimenter chooses how many draws to take per unit. This section examines the consequences of the stochasticity and number of draws on first the surrogacy condition itself and then for the quality of the calibrated estimator.

Assumption~\ref{as:surrogacy} conditions on a single realized draw $Y^*$ from $F(\cdot \mid W, X)$. It therefore requires that one noisy LLM output, together with the covariates, carries all the treatment-effect information about $Y$. This is harder to satisfy when the sampling noise is large relative to the treatment-effect signal in $Y^*$. Sampling multiple draws per unit can help, though only when the treatment-effect signal is preserved in the latent mean of the surrogate, a condition we now make precise.

Consider the decomposition $Y^* = \theta(W, X) + \varepsilon$, where $\theta(W, X) \coloneqq \mathbb{E}[Y^* \mid W, X]$ is the latent conditional mean of the surrogate and $\varepsilon \coloneqq Y^* - \theta(W, X)$ is the per-draw noise, satisfying $\mathbb{E}[\varepsilon \mid W, X] = 0$ and $\mathrm{Var}(\varepsilon \mid W, X) \coloneqq \sigma^2_\varepsilon(W, X)$. The deterministic case of Remark~\ref{rem:deterministic} corresponds to $\sigma^2_\varepsilon = 0$.

With $K > 1$ draws per unit, the experimenter may condition on the averaged surrogate $\bar Y^*_K = K^{-1}\sum_{k=1}^K Y^*_k$ rather than a single draw. This can weaken the surrogacy requirement, as formalized in the following:

\begin{proposition}[Multi-draw surrogacy]\label{prop:multidraw}
Suppose $Y \perp W \mid X, \theta(W, X)$, $\varepsilon \perp Y \mid W, X$,
and that $\mu(x, \cdot)$ is continuous for almost every $x$. Then
$\bar Y^*_K \xrightarrow{a.s.} \theta(W, X)$ as $K \to \infty$, and
\[
  \mu(X, \bar Y^*_K)
  \;\xrightarrow{a.s.}\;
  \mu(X, \theta(W, X)),
\]
so that in the limit, $W$ affects $\mu(X, \cdot)$ only through $\theta(W, X)$.
\end{proposition}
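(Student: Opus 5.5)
The plan is to prove the two convergence claims in sequence and then read off the interpretive conclusion from the first hypothesis.

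\textbf{Step 1: the strong law.} Condition on $(W, X)$. By construction the draws $Y^*_1, \ldots, Y^*_K$ are i.i.d.\ from $F(\cdot \mid W, X)$ with conditional mean $\theta(W, X)$. I will assume finite conditional first moment, which is implicit in defining $\theta$ and $\varepsilon$. Kolmogorov's strong law then gives $\bar Y^*_K \to \theta(W, X)$ almost surely under the conditional law, for each fixed $(w, x)$. To pass to the unconditional statement, let $A$ be the event of convergence. Then $\Pr(A) = \mathbb E[\Pr(A \mid W, X)] = \mathbb E[1] = 1$. This uses only that $\Pr(A \mid W, X)$ is a measurable version equal to one, which holds because $A$ is a measurable tail-type event in the product space.

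\textbf{Step 2: continuous mapping.} Let $N$ be the null set of $x$ on which $\mu(x, \cdot)$ fails to be continuous. Work on $A \cap \{X \notin N\}$, which has probability one under the assumption that the marginal law of $X$ is shared across $P$ (Section~\ref{sec:setup}). On this event, continuity of $\mu(X, \cdot)$ at the point $\theta(W, X)$ together with Step 1 gives
\[
\mu(X, \bar Y^*_K) \to \mu(X, \theta(W, X)) \quad \text{almost surely.}
\]

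\textbf{Step 3: interpretation.} Here the two hypotheses are used. Since $\varepsilon \perp Y \mid W, X$ and $Y \perp W \mid X, \theta(W, X)$, the limiting calibrated surrogate $\mu(X, \theta(W, X))$ is a function of $(X, \theta(W, X))$ alone. So $W$ enters only through $\theta$. Moreover, the per-draw noise, which is independent of $Y$, contributes nothing in the limit. This means the limiting surrogate satisfies the Prentice-type condition of Assumption~\ref{as:surrogacy} with $\theta$ in place of a single draw. One can also remark that $\mathbb E[Y \mid X, \theta] = \mathbb E[Y \mid W, X]$ under the first hypothesis, so the limit is the right object for Theorem~\ref{thm:identification}.

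\textbf{Main obstacle.} The subtle point is that the domain of $\mu$ changes. As defined in \eqref{eq:calibration}, $\mu(x, \cdot)$ is a regression on single draws $Y^*$, yet it is evaluated at $\bar Y^*_K$ and at $\theta$. These values may lie in a region of support where $\mu$ is defined only through its continuous extension. I would handle this by treating the continuity hypothesis as applying to a fixed version of $\mu(x, \cdot)$ defined on the closed convex hull of the support of $Y^*$. Both $\bar Y^*_K$ and $\theta$ lie in that hull. I would also note explicitly that the statement is about pointwise a.s.\ convergence of the surrogate, not about its equality with $\mathbb E[Y \mid X, \bar Y^*_K]$. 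The latter would need a separate uniform integrability argument that the proposition does not claim.
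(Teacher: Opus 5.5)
Your proposal is correct and follows the paper's proof: the strong law for the conditionally i.i.d.\ draws given $(W,X)$, then the continuous mapping theorem, then the observation that the limit $\mu(X,\theta(W,X))$ depends on $W$ only through $\theta$. You are more careful than the paper about passing from conditional to unconditional almost-sure convergence, about the null set of discontinuity points, and about the domain on which $\mu(x,\cdot)$ is evaluated. Two minor cautions. First, that final observation is immediate and uses neither stated hypothesis. Second, feeding the limiting surrogate into Theorem~\ref{thm:identification} requires the refit calibration $\mathbb{E}[Y \mid X, \theta]$ (the analogue of $\mu_K$ in Proposition~\ref{prop:variance_reduction}), not the single-draw $\mu$ evaluated at $\theta$, whose slope is attenuated by $R$ in the linear setting of Proposition~\ref{prop:attenuation}.
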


\begin{proof}
The almost-sure convergence of $\bar Y^*_K$ to $\theta(W,X)$ follows
from the strong law of large numbers applied to the $K$ conditionally
i.i.d.\ draws of $Y^*$ given $(W,X)$. Since $\mu(x,\cdot)$ is
continuous by assumption, the continuous mapping theorem gives
$\mu(X, \bar Y^*_K) \xrightarrow{a.s.} \mu(X, \theta(W,X))$. As
$\mu(X, \theta(W,X))$ is a function of $(X, \theta(W,X))$, $W$ enters
the limit only through $\theta(W,X)$.
\end{proof}

Proposition~\ref{prop:multidraw} is a statement about identification: when surrogacy holds for the latent mean, averaging recovers a valid surrogate in the sampling limit. But the stochasticity also affects estimation. Even when surrogacy holds, noise in $Y^*$ degrades the calibrated estimator \eqref{eq:ate_estimator} in two ways, both mitigated by averaging. Taking $\mu$ as known for the moment, by the law of total variance the variance of each arm-specific mean decomposes into a signal component reflecting across-unit heterogeneity, which is present even when $Y^*$ is deterministic, and a noise component reflecting within-unit LLM stochasticity, which vanishes when $\sigma^2_\varepsilon = 0$. That is,
\begin{align}\label{eq:var_decomp}
    \mathrm{Var}\big(\mu(X, Y^*) \mid W\!=\!w\big)
    &= \mathrm{Var}_X\!\big(\mathbb{E}_\varepsilon[\mu(X, Y^*) \mid X, W\!=\!w]\big) \notag \\
    &\quad + \mathbb{E}_X\!\big[\mathrm{Var}_\varepsilon\!\big(\mu(X, Y^*) \mid X, W\!=\!w\big)\big].
\end{align}
Now, define the noise-to-signal ratio
\begin{equation}\label{eq:nsr}
  \lambda_w \;\coloneqq\;
  \frac{\mathbb{E}_X\!\bigl[\mathrm{Var}_\varepsilon\!\bigl(\mu(X, Y^*) \mid X, W\!=\!w\bigr)\bigr]}
       {\mathrm{Var}_X\!\bigl(\mathbb{E}_\varepsilon[\mu(X, Y^*) \mid X, W\!=\!w]\bigr)},
\end{equation}
the ratio of the noise to the signal component in \eqref{eq:var_decomp}.
Setting the two expressions equal and solving gives the effective sample size $n_w^{\mathrm{eff}} \coloneqq n_w / (1 + \lambda_w)$, which reduces to $n_w$ in the deterministic case ($\sigma^2_\varepsilon = 0 \Rightarrow \lambda_w = 0$). When $\lambda_w > 0$, stochasticity deflates the effective sample size below $n_w$, inflating the variance of the calibrated estimator. Just like for identification, a natural remedy is to generate $K > 1$ independent draws per unit and use the averaged surrogate $\bar{Y}^*_i = K^{-1}\sum_{k=1}^K Y^*_{i,k}$. This reduces the variance inflation, as the following shows.

\begin{proposition}[Variance reduction via averaging]\label{prop:variance_reduction}
Let
\[
  \mu_K(x, \bar y^*) \;\coloneqq\; \mathbb{E}[Y \mid X=x,\, \bar Y^* = \bar y^*,\, P=0]
\]
and let $\hat\tau_K$ be the estimator \eqref{eq:ate_estimator} with $\mu_K(X_i, \bar Y^*_i)$ in place of $\mu(X_i, Y^*_i)$.
\begin{enumerate}
    \item[(a)] If $Y \perp W \mid X, \bar Y^*$ and $Y \perp P \mid X, \bar Y^*$ with overlap in $(X, \bar Y^*)$, then $\hat\tau_K \xrightarrow{p} \tau$.
    \item[(b)] If, in addition, $\mu_K$ is linear in $\bar y^*$, then $n_w^{\mathrm{eff}}(K) = n_w / (1 + \lambda_w/K) \to n_w$ as $K \to \infty$.
\end{enumerate}
\end{proposition}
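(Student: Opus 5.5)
Part (a) follows by rerunning Theorem~\ref{thm:identification} with $\bar Y^*_K$ as the surrogate and then appealing to consistency of the plug-in. The setup fixes $(M,I,\mathcal D)$ across $P$ and uses conditionally i.i.d.\ draws, so the conditional law of $\bar Y^*_K$ given $(W,X)$, being the $K$-fold convolution of $F(\cdot\mid W,X)$ rescaled, is the same under $P=0$ and $P=1$. The proof of Theorem~\ref{thm:identification} therefore goes through verbatim with $(Y^*,\mu)$ replaced by $(\bar Y^*_K,\mu_K)$: randomisation gives $\mathbb E[Y(w)\mid X]=\mathbb E[Y\mid X,W=w,P=0]$; iterated expectations with $Y\perp W\mid X,\bar Y^*$ turn this into $\mathbb E[\mu_K(X,\bar Y^*_K)\mid X,W=w,P=0]$; and $Y\perp P\mid X,\bar Y^*$ with overlap lets us evaluate the same function on $P=1$. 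This yields $\tau=\mathbb E[\mu_K\mid P=1,W=1]-\mathbb E[\mu_K\mid P=1,W=0]$.

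To finish (a), I take $\mu_K$ as known, as in the preceding discussion, so that $\hat\tau_K$ is a difference of two i.i.d.\ sample means within the artificial sample. Assuming $\mathbb E|\mu_K(X,\bar Y^*_K)|<\infty$, the weak law of large numbers gives $\hat\tau_K\xrightarrow{p}\tau$. If $\hat\mu_K$ is estimated, the cross-fitting remark after \eqref{eq:ate_estimator} (an $L_2$-consistency condition on $\hat\mu_K$ plus overlap) makes the estimation error $o_p(1)$.

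For (b), write $\mu_K(x,\bar y^*)=a(x)+b(x)\bar y^*$ and decompose $\bar Y^*_K=\theta(W,X)+\bar\varepsilon_K$, where $\bar\varepsilon_K=K^{-1}\sum_k\varepsilon_k$. Given $(W,X)$ the $\varepsilon_k$ are i.i.d.\ with mean zero and variance $\sigma^2_\varepsilon(W,X)$, so $\mathbb E[\bar\varepsilon_K\mid W,X]=0$ and $\mathrm{Var}(\bar\varepsilon_K\mid W,X)=\sigma^2_\varepsilon/K$. Linearity then gives the signal term $\mathbb E[\mu_K\mid X,W=w]=a(X)+b(X)\theta(w,X)$ and the noise term $\mathrm{Var}(\mu_K\mid X,W=w)=b(X)^2\sigma^2_\varepsilon(w,X)/K$. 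Because of linearity the signal no longer depends on $K$, while the noise is exactly $1/K$ times its single-draw value. Substituting into \eqref{eq:nsr} gives $\lambda_w(K)=\lambda_w/K$, and the definition $n^{\mathrm{eff}}_w=n_w/(1+\lambda_w)$ yields $n_w/(1+\lambda_w/K)\to n_w$.

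The main obstacle is that last identification. The calibration slope $b_K(x)$ of $\mu_K$ in general depends on $K$, since in the errors-in-variables picture it is attenuated by a reliability factor. Read literally, $\lambda_w$ would then be computed with $\mu_1$ rather than $\mu_K$. The plan is to interpret $\lambda_w$ as the noise-to-signal ratio computed with a common slope, so that $b(X)^2$ appears in both numerator and denominator. I would then note that any $K$-dependence of $b$ rescales the noise and signal terms by the same factor $b_K(X)^2$ whenever $b_K$ is constant in $x$. In that case $\lambda_w(K)=\lambda_w/K$ holds exactly regardless of the slope. If $b_K$ varies with $x$, I would state the result with $\lambda_w$ evaluated under the limiting slope $b_\infty(x)$ from the latent-mean relation in Proposition~\ref{prop:multidraw}.
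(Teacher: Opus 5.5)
Your proposal follows the paper's proof. Part (a) is Theorem~\ref{thm:identification} rerun with $(\bar Y^*_K,\mu_K)$. Part (b) is the observation that under linearity the noise term of \eqref{eq:var_decomp} scales by $1/K$ while the signal term stays put. Your common-slope reading of $\lambda_w$ is exactly what the paper's ``signal term is unchanged'' tacitly presupposes. Your WLLN step, and your check that $\bar Y^*_K\mid(W,X)$ has the same law under $P=0$ and $P=1$, supply details the paper leaves implicit.

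One fallback claim is wrong, though. A slope $b_K$ that varies with $K$ but not with $x$ does not in general cancel, because the intercept $a_K(x)$ of the true $\mu_K$ also moves with $K$.

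\begin{itemize}
\item In the linear model of Proposition~\ref{prop:attenuation}, a $\mu_K$ that is linear in $(x,\bar y^*)$ must coincide with the population least-squares fit $\alpha_0+\phi_0'x+\beta_0\{(1-R_K)m(x)+R_K\bar y^*\}$, where $m(x)=\mathbb{E}[\theta\mid X=x]$.
\item Here $b_K=\beta_0R_K$ is constant in $x$. Yet the signal term is $\mathrm{Var}_X\bigl(g(X)+\beta_0R_K\{\theta(w,X)-m(X)\}\bigr)$ with $g(X)=\mathbb{E}[Y\mid X]$.
\item This is not $\beta_0^2R_K^2$ times a $K$-free quantity unless $g$ is constant.
\end{itemize}

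With the true $\mu_K$ you therefore get only $\lambda_w(K)\to 0$, and hence $n^{\mathrm{eff}}_w(K)\to n_w$ when the limiting signal variance is positive. You do not get the exact identity $\lambda_w(K)=\lambda_w/K$.

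Drop that hedge and state (b) for one fixed linear calibration evaluated at $\bar Y^*_K$. Your $b_\infty$ option is of this kind, provided you also fix the limiting intercept $a_\infty(x)$.
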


\begin{proof}
Part (a) applies Theorem~\ref{thm:identification} with $(\bar Y^*, \mu_K)$ in place of $(Y^*, \mu)$, with the stated conditions being the analogues of Assumptions~\ref{as:surrogacy}--\ref{as:comparability} for the averaged surrogate. For part (b), averaging $K$ conditionally i.i.d.\ draws gives $\mathrm{Var}(\bar Y^*_K \mid X, W) = \sigma^2_\varepsilon(W, X)/K$. Under linearity, the noise term of \eqref{eq:var_decomp} scales by $K^{-1}$ while the signal term is unchanged, so the noise-to-signal ratio at $K$ draws is $\lambda_w/K$.
\end{proof}

Since LLM calls are cheap relative to running human experiments, even modest replication counts ($K = 5$--$10$) substantially reduce the noise-to-signal ratio, bringing the effective sample size close to the ideal $n_w$.

The preceding results assumed that $\mu$ is known. In practice, $\mu$ must be estimated from the experimental sample $P = 0$. Noise in $Y^*$ then introduces a second problem beyond variance inflation, relating to bias. The following proposition quantifies this bias and shows that averaging draws removes it in the limit, under a linear calibration model adopted for tractability.

\begin{proposition}[Attenuation from a noisy surrogate]\label{prop:attenuation}
Suppose that $\mathbb{E}[Y \mid X, \theta(W, X)] = \alpha_0 + \phi_0' X + \beta_0\, \theta(W, X)$ and $\mathbb{E}[\theta \mid X]$ are linear in $X$, that the LLM noise is homoskedastic, $\sigma^2_\varepsilon(W,X) \coloneqq \sigma^2_\varepsilon$, with $\varepsilon \perp Y \mid W, X$, and that $\mu$ is estimated by ordinary least squares of $Y$ on $(X, Y^*)$, $\hat\mu(x, y^*) = \hat\alpha + \hat\phi' x + \hat\beta y^*$. Then
\begin{equation}\label{eq:reliability}
    \hat\beta \xrightarrow{p} R\,\beta_0, \qquad R \coloneqq \frac{\mathrm{Var}(\theta \mid X)}{\mathrm{Var}(\theta \mid X) + \sigma^2_\varepsilon},
\end{equation}
and the calibrated ATE estimator in Eq.~\eqref{eq:ate_estimator} satisfies $\hat\tau \xrightarrow{p} R\,\tau$. Refitting the calibration on the average of $K$ independent draws per unit replaces $\sigma^2_\varepsilon$ by $\sigma^2_\varepsilon / K$ in $R$, so $R_K \to 1$ as $K \to \infty$.
\end{proposition}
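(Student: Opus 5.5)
The plan is to treat this as a classical errors-in-variables problem in the spirit of \citet{fuller1987measurement}, with $Y^* = \theta + \varepsilon$ as a noisy measurement of the latent regressor $\theta \coloneqq \theta(W,X)$. I would first apply the Frisch--Waugh--Lovell theorem to the population OLS projection of $Y$ on $(1, X, Y^*)$. This reduces $\operatorname{plim}\hat\beta$ to $\mathrm{Cov}(\tilde Y, \tilde Y^*)/\mathrm{Var}(\tilde Y^*)$, where tildes denote residuals from the linear projection on $(1,X)$. By the law of large numbers, $\hat\beta$ converges to this population coefficient.

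The two moments are computed as follows. Because $\mathbb{E}[\theta\mid X]$ is linear in $X$, the residual $\tilde\theta = \theta - \mathbb{E}[\theta\mid X]$, so $\mathrm{Var}(\tilde\theta)=\mathbb{E}[\mathrm{Var}(\theta\mid X)]$. This is what the statement writes as $\mathrm{Var}(\theta\mid X)$; I would read it as that average, or as constant under the linear specification. The noise satisfies $\mathbb{E}[\varepsilon\mid W,X]=0$, so $\varepsilon$ is uncorrelated with any function of $(W,X)$ and hence $\tilde\varepsilon=\varepsilon$. This gives $\mathrm{Var}(\tilde Y^*) = \mathrm{Var}(\tilde\theta)+\sigma^2_\varepsilon$ by homoskedasticity. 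For the numerator, I would write $Y = \alpha_0+\phi_0'X+\beta_0\theta + u$ with $\mathbb{E}[u\mid X,\theta]=0$. Then
\[
\mathrm{Cov}(\tilde Y,\tilde Y^*) = \beta_0\mathrm{Var}(\tilde\theta) + \mathrm{Cov}(u,\tilde\theta) + \mathrm{Cov}(Y,\varepsilon).
\]
The second term vanishes by the mean-independence of $u$. The last term vanishes using $\varepsilon\perp Y\mid W,X$ together with $\mathbb{E}[\varepsilon\mid W,X]=0$. Hence $\hat\beta\xrightarrow{p}R\beta_0$.

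The main obstacle is the step $\mathrm{Cov}(u,\tilde\theta)=0$. It requires the linear model to hold conditional on $(X,\theta)$, whereas $\theta$ depends on $W$. I would invoke the surrogacy-on-the-latent-mean condition from Proposition~\ref{prop:multidraw} to justify $\mathbb{E}[Y\mid W,X]=\alpha_0+\phi_0'X+\beta_0\theta(W,X)$. I would state this explicitly as part of the model interpretation.

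For the ATE, I would plug $\hat\mu$ into Eq.~\eqref{eq:ate_estimator}. The terms $\hat\alpha$ and $\hat\phi'X$ cancel across arms in the limit, because $X$ has the same distribution in both arms by randomisation in $P=1$. This leaves $\hat\tau\xrightarrow{p}\operatorname{plim}\hat\beta\cdot(\mathbb{E}[Y^*\mid W=1]-\mathbb{E}[Y^*\mid W=0]) = R\beta_0(\mathbb{E}\theta(1,X)-\mathbb{E}\theta(0,X))$. The true effect satisfies $\tau=\beta_0\,\mathbb{E}[\theta(1,X)-\theta(0,X)]$ by the linear model and randomisation in $P=0$, so $\hat\tau\xrightarrow{p}R\tau$.

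For the $K$-draw claim, I would rerun the same argument with $\bar Y^*_K=\theta+\bar\varepsilon_K$. The averaged noise is still conditionally mean zero, homoskedastic with variance $\sigma^2_\varepsilon/K$, and conditionally independent of $Y$. This gives $R_K = \mathrm{Var}(\theta\mid X)/(\mathrm{Var}(\theta\mid X)+\sigma^2_\varepsilon/K)$, and $R_K\to1$ as $K\to\infty$.
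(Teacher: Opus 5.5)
Your proposal is correct and follows essentially the same route as the paper. Frisch--Waugh--Lovell reduces the regression on $(X, Y^*)$ to classical attenuation of $\theta-\mathbb{E}[\theta\mid X]$ by $\varepsilon$. Plugging $\hat\mu$ into the estimator, randomization removes the $X$ term, and $\tau=\beta_0\,\mathbb{E}[\theta(1,X)-\theta(0,X)]$ then yields $R\tau$; the $K$-draw claim follows by replacing $\sigma^2_\varepsilon$ with $\sigma^2_\varepsilon/K$.

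One minor correction: $\mathrm{Cov}(u,\tilde\theta)=0$ is immediate from $\mathbb{E}[u\mid X,\theta]=0$, because $\tilde\theta$ is a function of $(X,\theta)$, so it is not an obstacle. Where your latent-mean surrogacy assumption actually earns its keep is the identity $\tau=\beta_0\,\mathbb{E}[\theta(1,X)-\theta(0,X)]$, which the paper asserts directly from the linear model.
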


\begin{proof}
Regressing $Y$ on $(X, Y^*)$ with $Y^* = \theta + \varepsilon$ is a classical errors-in-variables problem, as $\varepsilon$ is uncorrelated with both $X$ and $\theta(W,X)$ by $\mathbb{E}[\varepsilon \mid W,X] = 0$ and is also uncorrelated with the equation error $Y - \mathbb{E}[Y \mid X, \theta]$ by $\varepsilon \perp Y \mid W,X$. Since $\varepsilon$ is uncorrelated with $X$, the Frisch-Waugh-Lovell theorem reduces the problem to attenuation of $\theta - \mathbb{E}[\theta \mid X]$ by $\varepsilon$, giving $\hat\beta \xrightarrow{p} R\beta_0$ where the reliability ratio uses $\mathrm{Var}(\theta \mid X) = \mathrm{Var}(\theta - \mathbb{E}[\theta \mid X])$, which is the standard reliability-ratio result \citep[][ch.~1]{fuller1987measurement}. Substituting the fitted calibration function into the calibrated estimator then gives $\hat\tau = \hat\phi'(\bar X_1 - \bar X_0) + \hat\beta(\bar Y^*_1 - \bar Y^*_0)$, where randomization of $W$ leads to $\bar X_1 - \bar X_0 \xrightarrow{p} 0$ and $\bar Y^*_1 - \bar Y^*_0 \xrightarrow{p} \Delta_\theta \coloneqq \mathbb{E}[\theta \mid W{=}1] - \mathbb{E}[\theta \mid W{=}0]$. Since the linear model implies $\tau = \beta_0 \Delta_\theta$, it follows that $\hat\tau \xrightarrow{p} R\beta_0 \Delta_\theta = R\tau$. Refitting the calibration function on $\bar Y^*_K$ scales the noise variance to $\sigma^2_\varepsilon/K$, so that $R_K \to 1$ as $K \to \infty$.
\end{proof}

\begin{remark}[Nonlinear $\hat\mu$]\label{rem:attenuation_nonlinear}
When $\hat\mu$ is estimated by flexible nonparametric methods, the analogue of \eqref{eq:reliability} is over-smoothing of the fitted regression toward the marginal mean rather than a scalar coefficient attenuation \citep{fan1993nonparametric}; see also \citet[ch.~12]{carroll2006measurement}. The $K$-averaging solution still applies.
\end{remark}

Taken together, these results make the replication count $K$ a design parameter in LLM-based experiments. A single noisy draw can violate surrogacy and, even when identification holds, add bias and variance to the calibrated estimator. Averaging more draws addresses all three, at the cost of additional LLM calls.

\subsection{Evaluating and testing identification}\label{sec:evaluation}

Identification via Theorem~\ref{thm:identification} rests on surrogacy and comparability, and the preceding section showed that the stochasticity of LLM outputs can make surrogacy harder to satisfy at a single draw. Because neither assumption necessarily holds in any given application, we now develop two empirical tools for assessing them. The first is a moment-condition test that can falsify surrogacy when human outcomes are available on historical treatments, and the second is a sensitivity bound that quantifies the worst-case bias when comparability fails by a lack of overlap between the human and LLM samples. We state both for the single-draw pair $(Y^*, \mu)$ and indicate where the averaged pair $(\bar Y^*_K, \mu_K)$ applies instead, since the diagnostics work identically irrespective of which surrogate the experimenter conditions on.

\subsubsection{Surrogacy falsification test}\label{sec:backtest}

When human outcomes $Y$ are available on historical treatments, Assumption~\ref{as:surrogacy} implies a testable moment condition: for each arm $w \in \{0, 1\}$,
\begin{equation}\label{eq:backtest}
    \mathbb{E}[Y \mid W=w, P=0]
    =
    \mathbb{E}[\mu(X, Y^*) \mid W=w, P=0],
\end{equation}
which follows from iterated expectation applied to $\mu(X, Y^*) = \mathbb{E}[Y \mid X, Y^*, P=0]$, since Surrogacy makes the additional conditioning on $W$ redundant. In practice, \eqref{eq:backtest} can be implemented as a statistical test on the per-arm mean residual $\bar r_w = n_w^{-1}\sum_{i: W_i=w} (Y_i - \hat\mu(X_i, Y^*_i))$, with $\hat\mu$ fit on a separate training fold of $P=0$. By the Lindeberg--L\'{e}vy central limit theorem applied to the held-out residuals, which are i.i.d.\ conditional on the training fold, $\bar r_w / \widehat{\mathrm{SE}}(\bar r_w)$ is asymptotically standard normal under \eqref{eq:backtest}, so the moment is tested by a one-sample $z$-test against zero in each arm. 

Rejection of \eqref{eq:backtest} in such a test then falsifies surrogacy within $P=0$, provided $\hat\mu$ recovers the population regression $\mu(x, y^*) = \mathbb{E}[Y \mid X=x, Y^*=y^*, P=0]$. The diagnostic is therefore best understood as a falsification test, in which rejection refutes the assumption while non-rejection is necessary but not sufficient, since the moment can hold on historical treatments while failing on novel ones. The same test applies to the averaged surrogate when $(\bar Y^*_K, \mu_K)$ replaces $(Y^*, \mu)$, in which case it falsifies the multi-draw surrogacy condition of Proposition~\ref{prop:multidraw} rather than its single-draw counterpart, a distinction that matters because averaging can restore surrogacy that a single noisy draw violates.

\subsubsection{Sensitivity bound for overlap violations}\label{sec:sensitivity}

Assumption~\ref{as:comparability} requires support overlap in $(X,Y^*)$ across the calibration sample ($P=0$) and the artificial sample ($P=1$). We now formalize this requirement and show how deviations from this overlap bound the worst-case discrepancy in the ATE.

Let $Z \coloneqq (X,Y^*)$. For $p\in\{0,1\}$ and $w\in\{0,1\}$, denote by $q_{p,w}(z)$ the density of $Z$ conditional on $(P=p, W=w)$. The arm-specific total variation distance between the calibration and artificial samples \citep[see, e.g.,][pp.\ 83--84]{tsybakov2009nonparametric} is
\begin{equation}\label{eq:TV}
\mathrm{TV}_w \coloneqq \| q_{1,w} - q_{0,w} \|_{TV} = \tfrac{1}{2}\int |q_{1,w}(z) - q_{0,w}(z)|\,dz, \quad w \in \{0, 1\},
\end{equation}
which satisfies $0 \le \mathrm{TV}_w \le 1$, with $\mathrm{TV}_w = 0$ if and only if $q_{0,w} = q_{1,w}$ almost everywhere, and $\mathrm{TV}_w = 1$ if the corresponding supports are disjoint. Equivalently, $1 - \mathrm{TV}_w$ is the \citet{weitzman1970measures} overlap coefficient $\mathrm{OVL}_w = \int \min\{q_{0,w}(z), q_{1,w}(z)\}\,dz$, which measures the fraction of mass shared between $q_{0,w}$ and $q_{1,w}$ within arm $w$. We impose the following condition.

\begin{assumption}[Bounded outcomes]\label{as:bounded}
    There exists a constant $B<\infty$ such that $|Y|\le B$ almost surely.
\end{assumption}

Under this assumption, the calibration function $\mu(z) \coloneqq \mathbb{E}[Y\mid Z=z,P=0]$ is also bounded by $B$.

The following proposition bounds the discrepancy between the LLM-based ATE computed using the calibration function and the corresponding quantity computed in the experimental sample.

\begin{proposition}[Sensitivity bound]\label{prop:sensitivity}
Under Assumption~\ref{as:bounded}, define
$\Delta_p \coloneqq \mathbb{E}[\mu(Z)\mid P=p,W=1]-\mathbb{E}[\mu(Z)\mid P=p,W=0]$ for $p\in\{0,1\}$. Then
\begin{equation}\label{eq:sensitivity}
\big|\Delta_1 - \Delta_0\big| \le 2B\big(\mathrm{TV}_0 + \mathrm{TV}_1\big).
\end{equation}
\end{proposition}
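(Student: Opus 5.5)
The plan is to split the difference of arm contrasts into per-arm discrepancies and bound each one by a total variation argument that uses only the boundedness of $\mu$. Write
\[
\Delta_1 - \Delta_0 = \Big(\mathbb{E}[\mu(Z)\mid P=1,W=1]-\mathbb{E}[\mu(Z)\mid P=0,W=1]\Big) - \Big(\mathbb{E}[\mu(Z)\mid P=1,W=0]-\mathbb{E}[\mu(Z)\mid P=0,W=0]\Big).
\]
The triangle inequality then gives $|\Delta_1-\Delta_0| \le D_1 + D_0$, where
\[
D_w \coloneqq \big|\mathbb{E}[\mu(Z)\mid P=1,W=w]-\mathbb{E}[\mu(Z)\mid P=0,W=w]\big|.
\]
Each arm is then handled separately.

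For a fixed arm $w$, express both expectations as integrals against the densities $q_{1,w}$ and $q_{0,w}$ of $Z$. This works because $\mu$ is one fixed function defined on $P=0$ and evaluated everywhere. We get
\[
D_w = \Big|\int \mu(z)\,\big(q_{1,w}(z)-q_{0,w}(z)\big)\,dz\Big|.
\]
Assumption~\ref{as:bounded} gives $|\mu(z)| \le B$ wherever $\mu$ is defined, since $\mu$ is a conditional mean of a variable bounded by $B$. Pulling the absolute value inside the integral yields
\[
D_w \le B\int |q_{1,w}-q_{0,w}|\,dz = 2B\,\mathrm{TV}_w
\]
by the definition \eqref{eq:TV}. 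Summing over $w\in\{0,1\}$ gives \eqref{eq:sensitivity}.

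The only delicate step is that $\mu$ is defined through conditioning on $P=0$, so it is determined only on the support of $Z$ under $P=0$. Yet it is integrated against $q_{1,w}$, which may put mass outside that support. This is exactly the overlap failure the bound is meant to quantify. I would handle it by fixing any measurable extension of $\mu$ off the $P=0$ support that is still bounded by $B$; for instance, extend by zero or truncate to $[-B,B]$. This extension is what $\Delta_1$ implicitly uses, and the bound holds for every such choice because the argument uses only $\sup|\mu|\le B$.

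I would also note that the constant could be sharpened to $B$ instead of $2B$ per arm. Using $\int(q_1-q_0)=0$, one can subtract the midpoint constant: $|\int \mu\,(q_1-q_0)| = |\int(\mu - c)(q_1-q_0)|$ with $c=0$ and range width $2B$. This gives $(\sup\mu-\inf\mu)\,\mathrm{TV}_w \le 2B\,\mathrm{TV}_w$. So the stated bound follows either way, and no further ingredients beyond Assumption~\ref{as:bounded} and the definition of total variation are needed.
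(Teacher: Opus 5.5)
Your argument is the same as the paper's proof. You split $\Delta_1-\Delta_0$ into the two per-arm differences, write each as $\int \mu(z)\bigl(q_{1,w}(z)-q_{0,w}(z)\bigr)\,dz$, bound it by $B\int|q_{1,w}-q_{0,w}|\,dz = 2B\,\mathrm{TV}_w$, and finish with the triangle inequality. Your remark about fixing a $B$-bounded extension of $\mu$ off the $P=0$ support is a useful clarification that the paper leaves implicit.

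One correction to your closing aside: the per-arm constant cannot be lowered from $2B$ to $B$ in general. When $q_{0,w}$ and $q_{1,w}$ share a support, the choice $\mu = B\,\mathrm{sign}(q_{1,w}-q_{0,w})$ attains $2B\,\mathrm{TV}_w$ exactly. Your own centering computation only gives $(\sup\mu-\inf\mu)\,\mathrm{TV}_w$. That improves on $2B\,\mathrm{TV}_w$ only when $Y$ ranges over an interval shorter than $[-B,B]$, for example $Y\in[0,1]$ with $B=1$ as in the Upworthy application, where it would halve the bound.
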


\begin{proof}
Fix $w\in\{0,1\}$ and let $m_{p,w}\coloneqq\mathbb{E}[\mu(Z)\mid P=p,W=w]$, so that $\Delta_p = m_{p,1} - m_{p,0}$. By definition of $q_{p,w}$, we have $m_{1,w}-m_{0,w} = \int \mu(z)\big(q_{1,w}(z)-q_{0,w}(z)\big)\,dz$. Since $|\mu(z)|\le B$ by Assumption~\ref{as:bounded},
\begin{align*}
|m_{1,w}-m_{0,w}| \le B\int |q_{1,w}(z)-q_{0,w}(z)|\,dz = 2B \cdot \mathrm{TV}_w,
\end{align*}
where the last equality follows from the total variation identity in Equation~\eqref{eq:TV}. The triangle inequality applied to $\Delta_1 - \Delta_0 = (m_{1,1} - m_{0,1}) - (m_{1,0} - m_{0,0})$ then gives $|\Delta_1 - \Delta_0| \le 2B(\mathrm{TV}_0 + \mathrm{TV}_1)$.
\end{proof}

Proposition~\ref{prop:sensitivity} bounds the worst-case discrepancy in the ATE identified when the overlap condition in Assumption~\ref{as:comparability} fails, and can be interpreted as a partial identification result in the sense of \citet{manski2003partial}. The bound increases linearly in the arm-specific total variation distances: It vanishes when the distributions of $(X, Y^*)$ coincide across samples within each arm, and it can be at most $4B$, which occurs when the supports are disjoint in both arms.

Sharper bounds are possible under additional assumptions, for instance by replacing the total variation distances with a Wasserstein metric scaled by the Lipschitz constant of $\mu$ and assuming smoothness, or by exploiting estimable density ratios between the two samples. Such bounds come at the cost of restricting the function class by requiring additional regularity conditions on $\mu$. The distribution-free bound presented here can therefore be viewed as conservative and generally applicable.

\subsubsection{Estimating the bound}\label{sec:estimating_bound}

The sensitivity bound from Proposition~\ref{prop:sensitivity} on $|\Delta_1 - \Delta_0|$ depends only on the outcome bound $B$ and the arm-specific total variation distances $\mathrm{TV}_w$, which are functions of the joint distributions of $(X,Y^*)$ in the experimental and artificial samples. In practice, $B$ is known if the outcome is naturally bounded (e.g., binary or scaled outcomes) and may otherwise be approximated using domain knowledge or historical data.

The total variation distances can be estimated using plug-in methods. One approach is to estimate the conditional densities $q_{p,w}(z)$ within each $(P=p,W=w)$ cell using flexible parametric or nonparametric models, and then compute
\begin{equation}
    2B\big(\widehat{\mathrm{TV}}_0 + \widehat{\mathrm{TV}}_1\big),
\end{equation}
where
\begin{equation}
\widehat{\mathrm{TV}}_w
=
\tfrac{1}{2} \int |\hat q_{1,w}(z) - \hat q_{0,w}(z)|\,dz.
\end{equation}
Alternatively, $\mathrm{TV}_w$ may be approximated via density-ratio estimation or classification-based approaches that distinguish $(P=0)$ from $(P=1)$ within each arm. The integrals can be evaluated numerically using Monte Carlo methods.

When $Y^*$ is stochastic, $\mathrm{TV}_w$ is computed on the full distribution of $(X, Y^*)$ including the LLM sampling noise, so holding the decoding configuration $\mathcal{D}$ fixed across the calibration and artificial samples leaves the noise component symmetric across both arms and absorbs it into the bound identically. Varying $\mathcal{D}$ across samples, for instance through a different temperature during deployment, introduces an additional source of distributional shift that inflates $\mathrm{TV}_w$ and widens the bound. Since $\mu_K$ is also bounded by $B$, the same bound applies when we draw $K > 1$ samples per unit and replace $(X, Y^*)$ with $(X, \bar Y^*_K)$ and $\mu$ with $\mu_K$. The replication count $K$ changes only the surrogate on which $\mathrm{TV}_w$ is measured, not whether the bound holds.

\section{Monte Carlo Experiments}\label{sec:simulation}

We illustrate the identification result, the implications of violations of the assumptions, and the role of stochastic and multiple LLM draws using a set of Monte Carlo experiments. Each design draws $1{,}000$ observations per arm, giving $n = 2{,}000$ in the two-arm designs and $n = 3{,}000$ in the three-arm surrogacy-falsification test scenario. This choice can be interpreted as a medium-sample setting in which human samples are costly to collect. We use $1{,}000$ Monte Carlo replications, reduced to $200$ for the overlap-based scenarios because the kernel-density estimation they require is expensive. We use a linear data-generating process (DGP) where the true calibration function is linear in $(Y^*, X)$, as well as a nonlinear DGP in which $\mu(x, y^*) = 0.3 (y^*)^2 + 0.5 \cos(\pi x)$. For the linear DGP, we fit $\hat\mu$ by ordinary least squares; for the nonlinear DGP, we use a random forest. Full code and seeds are available in the supplementary material.

\subsection{Identification under correct assumptions}

We first verify Theorem~\ref{thm:identification} when Assumptions~\ref{as:surrogacy}--\ref{as:comparability} hold. Figure~\ref{fig:identification} plots the sampling distribution of the calibrated ATE alongside the raw LLM ATE, for both the linear and nonlinear DGPs. The raw LLM estimator is severely biased (mean $\approx 0.75$ against a true $\tau = 0.30$ for the linear DGP; $\approx 0.50$ against $\tau \approx 0.27$ for the nonlinear DGP), reflecting the fact that $Y^*$ and $Y$ live on different scales. Applying the calibration function recovers the human ATE almost exactly, with residual bias of order $10^{-3}$ in both DGPs and RMSE of $0.012$ (linear) and $0.021$ (nonlinear). The deterministic DGP ($\sigma^2_\varepsilon = 0$) yields essentially identical results, consistent with Remark~\ref{rem:deterministic}.

\begin{figure}[t]
\centering
\includegraphics[width=\linewidth]{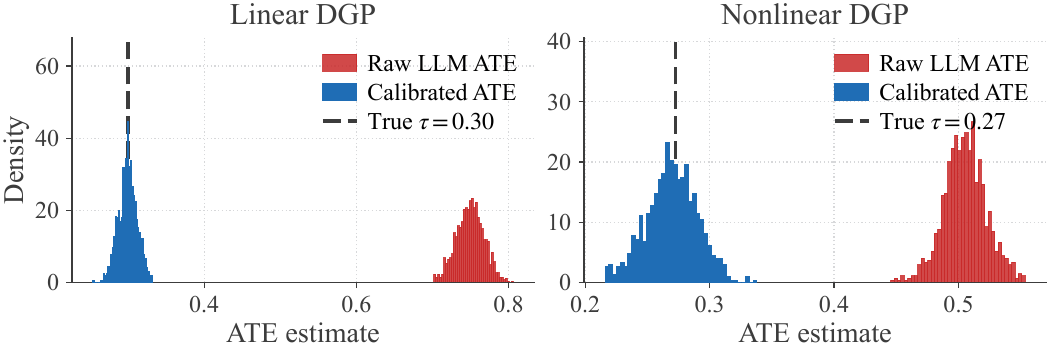}
\caption{Sampling distribution of the calibrated ATE (blue) versus the raw LLM ATE (orange) under Assumptions~\ref{as:surrogacy}--\ref{as:comparability}. Calibration recovers the true human ATE $\tau$ (dashed) in both the linear (left) and nonlinear (right) DGPs; the raw LLM ATE is systematically biased.}
\label{fig:identification}
\end{figure}

We further verify the parametric $\sqrt{n}$-rate implied by Theorem~\ref{thm:identification} by sweeping the per-population sample size $n$ on the linear DGP. Figure~\ref{fig:consistency_rate} reports the empirical RMSE of the calibrated ATE across $n \in \{250, 500, 1{,}000, 2{,}000, 5{,}000, 10{,}000, 20{,}000\}$. Panel (a) plots RMSE against $n$ on log--log axes alongside the $n^{-1/2}$ reference. We see that the empirical curve tracks the reference closely across the full grid. Panel (b) plots $\mathrm{RMSE} \cdot \sqrt{n}$ against $n$, where a flat line in $n$ corresponds to the parametric $\sqrt{n}$-rate. The empirical $\mathrm{RMSE} \cdot \sqrt{n}$ hovers near $0.55$ across two orders of magnitude in $n$, with deviations consistent with Monte Carlo sampling noise at the replication count used here.

\begin{figure}[t]
\centering
\includegraphics[width=\linewidth]{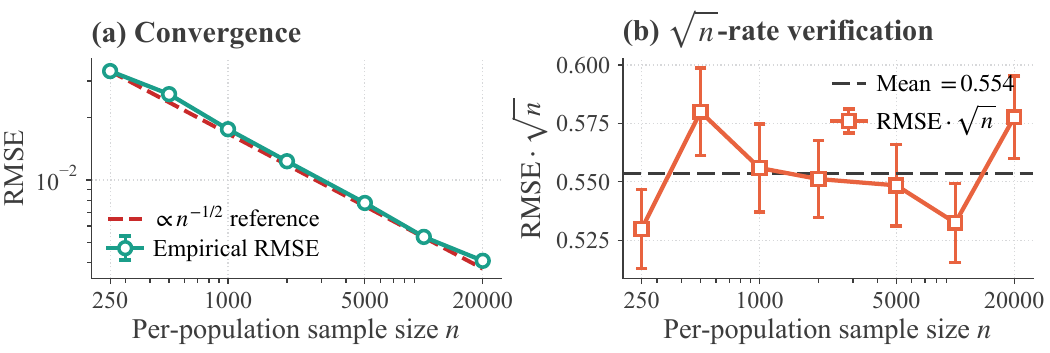}
\caption{Empirical $\sqrt{n}$-consistency of the calibrated ATE on the LinearDGP. (a) RMSE against the per-population sample size $n$, with the $\propto n^{-1/2}$ reference for comparison. (b) $\mathrm{RMSE} \cdot \sqrt{n}$ against $n$, hovering near $0.55$ across the full grid and consistent with the parametric $\sqrt{n}$-rate. Error bars denote Monte Carlo standard errors.}
\label{fig:consistency_rate}
\end{figure}

We also test whether the results are sensitive to the choice of calibration model. Table~\ref{tab:robustness} reports the estimated ATEs from OLS, random forest, and gradient-boosted trees on both DGPs. The model best matched to the DGP attains the lowest RMSE, with OLS best on the linear DGP and the nonparametric methods best on the nonlinear one, but a mismatched model is not much worse, as OLS on the nonlinear DGP still recovers the ATE to within about $6\%$ relative bias. This error is small relative to the bias from violating Surrogacy or Comparability (Figure~\ref{fig:failure_modes}), suggesting identification matters more than specification.

\begin{table}[!htb]
\centering
\caption{Sensitivity of the calibrated ATE to the choice of calibration model. Cells report summary statistics over $1{,}000$ Monte Carlo replications with $n=2{,}000$. Random forests and gradient-boosted trees use default hyperparameters with 200 trees.}
\label{tab:robustness}
\resizebox{\columnwidth}{!}{\begin{tabular}{llrrrr}
\toprule
DGP & Calibration model & Mean $\hat\tau$ & Bias & RMSE & SD \\
\midrule
Linear (true $\tau = 0.300$) & OLS & 0.2994 & -0.0006 & 0.0122 & 0.0122 \\
 & Random forest & 0.2996 & -0.0004 & 0.0131 & 0.0131 \\
 & Gradient boosting & 0.2992 & -0.0008 & 0.0134 & 0.0134 \\
\midrule
Nonlinear (true $\tau = 0.273$) & OLS & 0.2552 & -0.0175 & 0.0264 & 0.0197 \\
 & Random forest & 0.2695 & -0.0032 & 0.0207 & 0.0205 \\
 & Gradient boosting & 0.2704 & -0.0023 & 0.0206 & 0.0205 \\
\bottomrule
\end{tabular}
}
\end{table}

\subsection{Bias under identification violations}

Figure~\ref{fig:failure_modes} shows what happens when the identifying assumptions fail. In panel (a), we introduce a direct effect of $W$ on $Y$ that bypasses $Y^*$, governed by a parameter $\gamma$, violating Assumption~\ref{as:surrogacy}. We do so by adding a term $\gamma W$ to the outcome equation, so that $Y = \mu(X, Y^*) + \gamma W + \eta$ and a portion of the total treatment's effect on $Y$ is direct through a channel that $Y^*$ does not capture, with $\gamma = 0$ recovering the surrogacy-respecting DGP where the treatment effect is only indirect. We find that the bias of the calibrated estimator then scales linearly in $\gamma$. The slope, however, is smaller in magnitude than the one-for-one rate one might naively expect. The reason is that the calibration function is, by construction, a regression of $Y$ on $(X, Y^*)$ that excludes $W$, so the omitted $\gamma W$ is partly absorbed into the fitted coefficient on $Y^*$ through the correlation between $W$ and $Y^*$, and this coefficient is in turn attenuated by the measurement noise in $Y^*$. A closed-form calculation for the linear DGP delivers a slope of $-0.395$, which the simulation recovers at $-0.396$ across the $\gamma$ grid. In panel (b), we induce a shift of magnitude $\delta$ in the slope of the calibration function between $P=0$ and $P=1$, in turn violating Assumption~\ref{as:comparability}. The bias again scales linearly in $\delta$, with a larger slope because the violation compounds across arms. Both panels show that neither assumption is redundant, with surrogacy and comparability failing in empirically distinct ways that produce predictable bias in the calibrated estimator.

\begin{figure}[t]
\centering
\includegraphics[width=\linewidth]{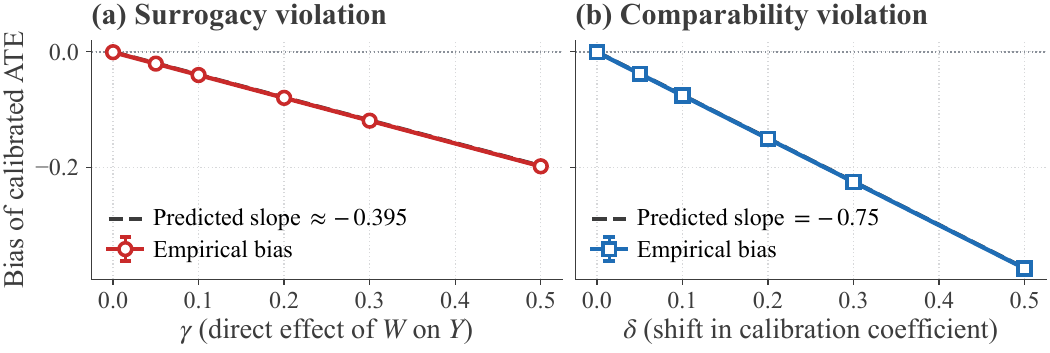}
\caption{Bias of the calibrated ATE under violations of each identifying assumption. (a) Surrogacy is violated by a direct effect $\gamma$ of $W$ on $Y$. (b) Comparability is violated by a shift $\delta$ in the calibration slope between $P=0$ and $P=1$. In both cases, bias scales linearly in the violation parameter. Dashed lines show the closed-form theoretical slopes from the LinearDGP and OLS calibration, $-0.395$ for $\gamma$ in panel (a) and $-0.75$ for $\delta$ in panel (b). Error bars denote Monte Carlo standard errors.}
\label{fig:failure_modes}
\end{figure}

\subsection{Multi-draw surrogacy and estimation quality}

We operationalize the multi-draw setting with a deliberately adversarial DGP in which $Y$ depends on the latent mean $\theta(W, X) = \mathbb{E}[Y^* \mid W, X]$ rather than on any realized $Y^*$, and the per-draw noise is large ($\sigma_\varepsilon = 1.5$), so that single-draw surrogacy fails by construction rather than as a generic property of the method. Figure~\ref{fig:multidraw} reports the mean calibrated ATE across $K$. It is strongly attenuated toward zero at $K=1$ but approximately recovers the true ATE by $K=200$ (mean $\approx 0.02$ versus $\approx 0.28$ against the true $\tau = 0.30$). The large $K$ required here reflects the deliberately high noise level ($\sigma_\varepsilon = 1.5$) chosen for illustrative purposes; fewer draws may suffice in practice, as discussed in Section~\ref{sec:estimation_stochastic} and empirically shown in Section~\ref{sec:empirical}. This result confirms both predictions of our theory, with the de-attenuation tracking the reliability ratio $R_K$ of Proposition~\ref{prop:attenuation} and surrogacy restored in the large-$K$ limit by Proposition~\ref{prop:multidraw}.

\begin{figure}[t]
\centering
\includegraphics[width=0.72\linewidth]{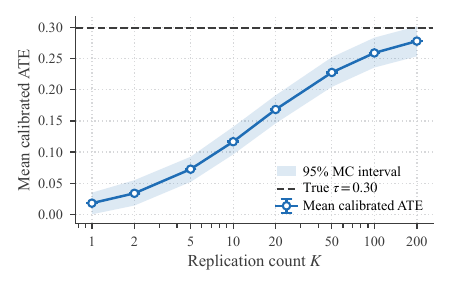}
\caption{Multi-draw surrogacy relaxation. When $Y$ depends on the latent $\theta(W,X)$ rather than on a particular realization, single-draw surrogacy fails. Averaging $K$ independent draws of $Y^*$ per unit restores identification as $K$ grows, consistent with the multi-draw surrogacy condition of Proposition~\ref{prop:multidraw}. Error bars denote Monte Carlo standard errors.}
\label{fig:multidraw}
\end{figure}

Proposition~\ref{prop:variance_reduction} states that the noise component of the calibrated estimator's variance shrinks as $1/K$, while Proposition~\ref{prop:attenuation} states that the attenuation from a stochastic surrogate shrinks through the reliability ratio $R_K \to 1$. Figure~\ref{fig:estimation_quality} reproduces both predictions, with RMSE falling an order of magnitude between $K=1$ and $K=50$ (panel a) and the mean estimate rising monotonically toward the true $\tau$ (panel b). The curvature in panel (b) tracks the reliability ratio, the same pattern that classical measurement-error theory produces for a linear specification with a noisy regressor \citep{fuller1987measurement}.

\begin{figure}[t]
\centering
\includegraphics[width=\linewidth]{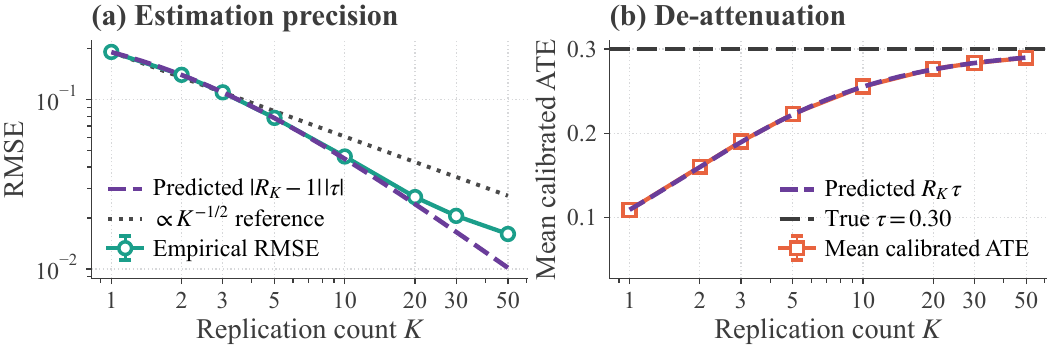}
\caption{Estimation quality as a function of the replication count $K$. (a) RMSE falls an order of magnitude between $K=1$ and $K=50$, matching the $1/K$ scaling of the noise term in Proposition~\ref{prop:variance_reduction}(b). (b) The mean calibrated ATE approaches the true $\tau$ as $K$ grows, tracking the reliability ratio $R_K$ (Proposition~\ref{prop:attenuation}).}
\label{fig:estimation_quality}
\end{figure}

\subsection{Overlap sensitivity and surrogacy falsification}\label{sec:simulation_sensitivity}

To stress-test Proposition~\ref{prop:sensitivity}, we introduce an asymmetric location shift of size $s$ to $Y^*$ in the treated arm of $P=1$ only, so that the $W=1$ arm extrapolates beyond the $P=0$ support while the $W=0$ arm remains aligned. Under the linear calibration used here, this produces $|\Delta_1 - \Delta_0| = |b_1 \cdot s|$ exactly, where $b_1$ is the coefficient on $Y^*$ in the linear calibration function, while inflating $\mathrm{TV}_1$ and leaving $\mathrm{TV}_0 \approx 0$. Figure~\ref{fig:sensitivity} plots both quantities. The actual $|\Delta_1 - \Delta_0|$ grows linearly with $s$, and the theoretical bound $2B(\mathrm{TV}_0 + \mathrm{TV}_1)$ (with $B=3$) lies strictly above it at every shift, saturating at $2B$ as $\mathrm{TV}_1 \to 1$ in the shifted arm alone. The bound holds with probability $1$ across every replication and every shift in our grid, empirically validating Proposition~\ref{prop:sensitivity}. The gap between $|\Delta_1 - \Delta_0|$ and the bound reflects the conservativeness of the worst-case analysis. The smooth linear $\mu$ used here never approaches the extreme values $\pm B$ that would make the bound tight. Quantitatively, the empirical-to-theoretical bound ratio $|\Delta_1 - \Delta_0| / [2B(\mathrm{TV}_0 + \mathrm{TV}_1)]$ grows from $0.015$ at zero shift to $0.125$ at the largest shift in our grid, so the bound is never tighter than a factor of eight even when the LLM in the treated arm is shifted by two units of $Y^*$, which, for reference, is roughly $3.6$ times its standard deviation.

\begin{figure}[t]
\centering
\includegraphics[width=0.72\linewidth]{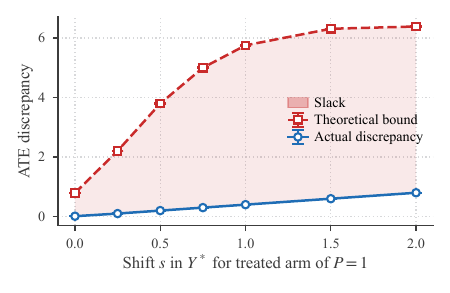}
\caption{Empirical validation of the sensitivity bound. An asymmetric shift is applied to $Y^*$ in the treated arm of $P=1$, which inflates $\mathrm{TV}_1$ only in that arm. The actual $|\Delta_1 - \Delta_0|$ (blue) grows linearly. The theoretical bound from Proposition~\ref{prop:sensitivity}, $2B(\mathrm{TV}_0 + \mathrm{TV}_1)$ (red dashed), grows and saturates at $2B$ as $\mathrm{TV}_1 \to 1$. The bound holds with probability one across all shifts and replications.}
\label{fig:sensitivity}
\end{figure}

As a final check, we illustrate the necessary-but-not-sufficient nature of the falsification test from Section~\ref{sec:backtest}. We learn a calibration function on two historical treatments ($W=0, W=1$), where surrogacy holds, and then apply it to a novel treatment ($W=2$) with a direct effect not mediated by $Y^*$. The calibrated ATE for the historical contrast recovers the human ATE up to a residual bias of $+0.12$ (on a true $\tau = 0.30$), while for the novel contrast it misses the human ATE by $-0.14$ (on a true $\tau = 0.72$). This illustrates that the test is necessary but not sufficient: the moment condition can hold approximately on historical treatments while identification fails for novel ones.

\section{Empirical Application: Upworthy Headline A/B Tests}\label{sec:empirical}

We use the Upworthy Research Archive \citep{matias2021upworthy}, an open dataset of $32{,}487$ randomized headline A/B tests that Upworthy ran between January 2013 and April 2015, released with a data descriptor in \emph{Nature Scientific Data}. In each test, visitors were randomly assigned to one of several headline variants of an otherwise identical article, and the platform recorded impressions and clicks for each variant, which gives the randomized, real-human treatment effects we need to benchmark a calibrated LLM surrogate.

From the archive's open access release we first drop the tests flagged for a randomization bug (see the authors' 2024 erratum for details). We also omit variants with fewer than $100$ impressions to ensure our results are not overly driven by noise, and pair the two highest-impression variants of each test into a single binary A/B contrast. This leaves us with $3{,}603$ paired tests with a median of $3{,}178$ impressions per variant.

To test whether a calibrated LLM surrogate recovers a \emph{human} treatment effect, the treatment--control contrast should be assigned independently of the LLM. To this end, we take the treatment to be a feature of the headline itself, and define a treated variant as one whose headline is phrased as a question. We restrict our analysis to the $417$ tests whose two variants differ on this feature, so that within each test visitors are randomized between headlines that do or do not pose a question. The estimated ATE on their click-through rates is then a genuine human causal effect that the calibrated surrogate should recover, where the contrast represents the effect of a question-headline as a whole.

\subsection{Variables and surrogate construction}

The unit of analysis is a test variant aggregated over user interactions, with outcome $Y \in [0,1]$ the click-through rate and treatment $W$ the question indicator defined above. We use headline length, the test's calendar week, and the number of variants in the test, as covariates $X$ for fitting the calibration function, where the latter two are fixed before randomization per the data. We use \texttt{gpt-4o-mini} at temperature $0.7$ to construct the LLM surrogates $Y^* \in [0,1]$, prompting the model to predict the click-through rate $Y$ of a typical Upworthy-era Facebook user, as follows:

\medskip
\begin{quote}\small\ttfamily
\textbf{System.} You estimate click-through rates (CTRs) for headlines posted on
Upworthy.com between 2013 and 2015. Upworthy's traffic came mostly from the
Facebook news feed in that period. Given a candidate headline, predict the CTR
that a typical Upworthy-era Facebook user would exhibit when shown it in their
feed. Respond with a single integer giving the predicted CTR in basis points
(1 bp = 0.01\%). For example, an answer of `80' means 0.80\%. Do not include any
other text, units, or commentary. Only the integer. \\
\medskip
\textbf{User.} Headline: \{headline\} Predicted CTR (basis points):
\end{quote}
\medskip

We draw $K=10$ independent LLM responses per prompt and convert each integer basis-point answer to a probability $Y^* \in [0,1]$ by dividing by $10{,}000$. For example, a response of $80$ becomes an LLM-predicted CTR of $Y^* = 0.008$.

We use a diagnostic to check whether the LLM has memorized this dataset, which would let it recall the realized CTRs instead of predicting them, but find no evidence of it. See Section~\ref{sec:uw_diagnostics} for details.

\subsection{Main results}\label{sec:uw_question}

Table \ref{tab:uw_question} shows the ATE estimates on the actual CTR and those from the surrogates before and after calibration. We construct each calibrated prediction by five-fold cross-fitting over tests, fitting $\hat\mu$ on the out-of-fold variants of the full sample and applying it to the held-out fold, so that no test informs its own calibration while the calibration still draws on the entire dataset for power. The raw surrogate recovers the sign of the human effect but attenuates its magnitude to roughly two-fifths ($-4.5 \times 10^{-4}$, SE $1.2 \times 10^{-4}$), and its average gap to the human contrast is significantly positive ($+7.2 \times 10^{-4}$, $t = 2.7$), so the raw surrogate is biased toward zero rather than recovering the effect. Linear calibration reduces this gap but remains significantly attenuated ($t = 2.3$), whereas the two nonparametric calibration estimators (RF and GBT) close it to within sampling error, with gaps of $+3.7 \times 10^{-4}$ ($t = 1.2$) for the random forest and $+4.4 \times 10^{-4}$ ($t = 1.4$) for the gradient-boosted trees. Here, the attenuation of the raw surrogate can be understood from the measurement-error attenuation that Proposition~\ref{prop:attenuation} predicts for a noisy surrogate. The fact that only the nonparametric estimators have nonsignificant errors reflects that a linear function may not fully capture the surrogate relationship due to misspecified functional form, even if it includes all predictors.

\begin{table}[t]
\centering
\caption{Estimated ATEs $\hat\tau$. Each ATE is the within-test paired difference between the question and non-question variant, averaged over tests, with standard errors from the test-level paired differences. Calibrated predictions use five-fold cross-fitting over tests.}
\label{tab:uw_question}
\begin{tabular}{lccc}
\toprule
Estimator & $\hat\tau$ & SE & Gap ($t$-statistic) \\
\midrule
Human & $-0.00116$ & $0.00024$ & --- \\
Raw LLM, $K=10$ & $-0.00045$ & $0.00012$ & $+0.00072$ ($2.7$) \\
Calibrated LLM, OLS & $-0.00056$ & $0.00013$ & $+0.00061$ ($2.3$) \\
Calibrated LLM, RF & $-0.00079$ & $0.00022$ & $+0.00037$ ($1.2$) \\
Calibrated LLM, GBT & $-0.00072$ & $0.00024$ & $+0.00044$ ($1.4$) \\
\bottomrule
\end{tabular}
\end{table}

\begin{figure}[tbp]
\centering
\includegraphics[width=\linewidth]{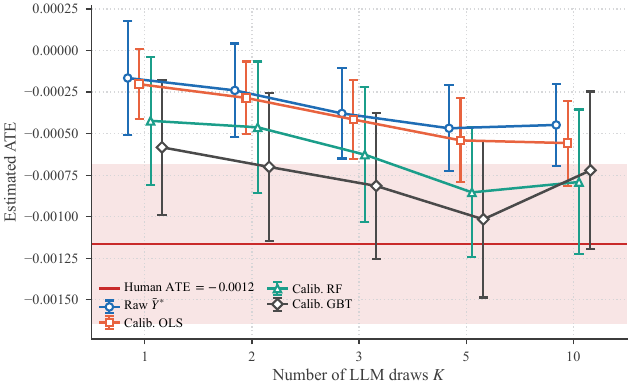}
\caption{Estimated ATEs for the raw surrogate $\bar Y^*$ and the three calibrated estimators as a function of the number of averaged LLM draws $K$, with $95\%$ confidence intervals. The red line and shaded band mark the human ATE and its $95\%$ confidence interval.}
\label{fig:uw_ksweep}
\end{figure}

Figure~\ref{fig:uw_ksweep} visualizes each estimator's ATE as a function of the number of LLM draws per unit-prompt, that we then use as averaged surrogate $\bar Y^*$. Increasing $K$ from 1 to 10 reduces the per-draw sampling noise in $\bar Y^*$ and raises its reliability ratio $R_K$. It thereby de-attenuates the estimates toward the human ATE as $K$ grows. Most of the change occurs over the first few draws. This finding is consistent with Propositions~\ref{prop:variance_reduction} and~\ref{prop:attenuation} by showing that noise decreases and that estimates de-attenuate toward the human ATE as $K$ grows. As our theoretical results predict, the two nonparametric calibrations exhibit least error, reaching the human two-standard-error band by $K \approx 5$ (at $K \approx 5$ the gradient-boosted estimate is $-1.0 \times 10^{-3}$, against a human effect of $-1.2 \times 10^{-3}$), whereas the raw surrogate and the linear calibration de-attenuate more slowly and remain in excess of 2 standard errors away even at $K=10$.

\subsection{Diagnostics}\label{sec:uw_diagnostics}

We run several diagnostics on the results, all on the same sample of tests as the main analysis. The first two stress-test whether the calibration function $\hat\mu$ transports across populations under a deliberately induced distribution shift. We induce the shift within the human sample rather than test comparability directly across $P=0$ and $P=1$, since the artificial sample carries no human outcome against which $\hat\mu$ could be checked. Concretely, we order the question tests by date and split them into an early calibration set (the first $80\%$, $n_0 = 666$ variants across $333$ tests) and a late hold-out (the final $20\%$, $n_1 = 168$ variants across $84$ tests), fitting $\hat\mu$ on the early set and evaluating it on the late hold-out. The latter two diagnostics instead probe the surrogate directly, checking whether the estimator recovers a known effect when one is constructed and whether the LLM's surrogate is an actual prediction and not just memorized from its training data.

\paragraph{Surrogacy falsification test.} Section~\ref{sec:backtest} shows that under Assumption~\ref{as:surrogacy} the moment $\mathbb{E}[Y - \mu(X, Y^*)] = 0$ holds on a held-out sample. We find that the realized moment depends on the calibration model: OLS gives $+2.0 \times 10^{-3}$ (bootstrap standard error $5.3 \times 10^{-4}$, $3.8$ SEs from zero), random forests $-7.9 \times 10^{-4}$ ($5.0 \times 10^{-4}$, $1.6$ SEs), and gradient-boosted trees $+7.0 \times 10^{-4}$ ($4.9 \times 10^{-4}$, $1.4$ SEs). The test therefore falsifies the implication of Theorem~\ref{thm:identification} for OLS but not for the two nonparametric calibrations. This result is consistent with Table~\ref{tab:uw_question}, where OLS remains significantly attenuated in the ATE while the random forest and gradient-boosted trees recover the human ATE within sampling error. Note, however, that the moment is a pooled level check on $\hat\mu$ rather than on the contrast $\hat\tau$, so the two results need not coincide in general, but here the calibration that fails the level test is also the one that misses the ATE.

\paragraph{Overlap and sensitivity bound.} The temporal split induces distribution shift whereby the arm-specific total variation distances between the early and late test sets are $\widehat{\mathrm{TV}}_0 = 0.828$ and $\widehat{\mathrm{TV}}_1 = 0.821$ (equivalently, only $\widehat{\mathrm{OVL}}_w \approx 0.18$ of the joint mass overlaps within either arm). This suggests that Assumption~\ref{as:comparability} holds only partially on the joint $(X, Y^*)$ distribution. Computing the bound from Proposition~\ref{prop:sensitivity} with $B = \max_{y \in \mathcal{Y}} |y| = 1$ yields a worst-case bound of $2(\widehat{\mathrm{TV}}_0 + \widehat{\mathrm{TV}}_1) \approx 3.30$ on the error between any LLM-based calibrated estimate and the human ATE. By contrasting this bound with the realized error, we can see whether the weak overlap translates to a large error in this particular application. We find that it does not. The realized error is at most $\approx 1.3 \times 10^{-4}$ across the three calibration methods, an empirical-to-theoretical ratio of at most $\approx 3.9 \times 10^{-5}$, more than four orders of magnitude below the worst-case ratio of one, which lends credibility to the estimate despite the incomplete overlap. The bound is loose for two reasons. First, it is a worst case over all calibration functions bounded by $B$, while our estimated $\hat\mu$ is far smoother. Second, the total variation distance counts every distributional difference between the early and late periods, yet much of that difference is a common shift in the overall CTR level over time. Such a shift moves $\hat\mu$ by about the same amount in both arms and cancels in the contrast $\hat\tau$, so it inflates $\widehat{\mathrm{TV}}_w$ without adding error to the ATE.

\paragraph{Positive control.} To check that the estimator recovers an ATE when one exists, we set $\tilde Y_i = Y_i + \tau^\dagger W_i$ and construct an informative surrogate $\tilde Y^*_{i,k} = \tilde Y_i + \nu_{i,k}$ with $\nu_{i,k} \sim \mathcal{N}(0, \sigma_\nu^2)$ over $K = 10$ draws, holding the covariates $X$, the temporal split, and the grouping of variants into tests fixed. Figure~\ref{fig:uw_synth} reports $\hat\tau$ across values of $\tau^\dagger$ and noise regimes with reliability ratios $R_{K=10} \in \{0.91, 0.53, 0.09\}$. At the null of $\tau^\dagger = 0$, every estimate lies within $6 \times 10^{-4}$ of zero across all methods and noise regimes. At nonzero $\tau^\dagger$, the estimator tracks the constructed ATE closely when reliability is high and attenuates toward zero as $R_K$ falls (the fraction of effect size recovered is around $0.5$ at $\sigma_\nu = 3\sigma_Y$ and near zero at $\sigma_\nu = 10\sigma_Y$), consistent with Proposition~\ref{prop:attenuation}.

\begin{figure}[tbp]
\centering
\includegraphics[width=0.62\linewidth]{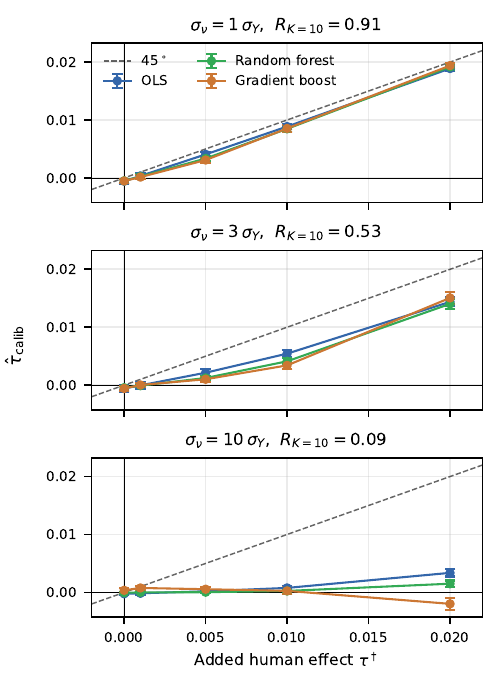}
\caption{Positive control on the Upworthy data. A known effect $\tau^\dagger$ is added to the human outcome ($\tilde Y_i = Y_i + \tau^\dagger W_i$), and an informative surrogate is constructed as $\tilde Y^*_{i,k} = \tilde Y_i + \nu_{i,k}$ with $\nu_{i,k} \sim \mathcal{N}(0, \sigma_\nu^2)$ across $K = 10$ draws. Each panel corresponds to one noise regime; the reliability ratio $R_{K = 10}$ from Proposition~\ref{prop:attenuation} controls how closely each calibration method tracks the $45^\circ$ line. Error bars are one cluster-bootstrap standard error at \texttt{test\_id}.}
\label{fig:uw_synth}
\end{figure}

\paragraph{Did the LLM memorize the data?} A possible explanation for the predictive content of the LLM surrogates is that \texttt{gpt-4o-mini} has memorized the precise 2013--2015 Upworthy headlines from its pre-training corpus and is simply recalling them. To test this, we take a random sample of $300$ headlines, present the LLM with only the first four words of each headline, ask it to complete the headline, and then score the output against the true suffix. We use the token-level $F_1$ to measure the word overlap between the completion and the true suffix, where a high value indicates that the model reproduced the actual headline rather than inventing a plausible one. We find a mean token-level $F_1$ of $0.093$ (median $0.091$), with no output exceeding $F_1 = 0.5$. For reference, \citet{carlini2023quantifying} report an $F_1$ range of $0.1$--$0.3$ for non-memorized text. Our values sit at or below the lower end of this range, implying little evidence of memorization and that the LLM surrogates are indeed genuine predictions. Figure~\ref{fig:uw_contam} plots the full token-$F_1$ distribution.

\begin{figure}[tbp]
\centering
\includegraphics[width=0.72\linewidth]{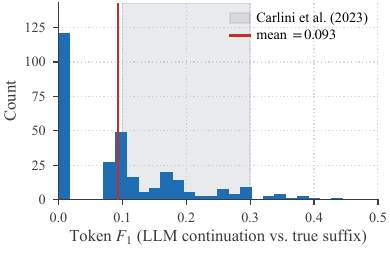}
\caption{Distribution of token-$F_1$ for the headline-completion test. For $300$ randomly sampled Upworthy headlines, we present \texttt{gpt-4o-mini} with the first four words and score its continuation against the true suffix by token $F_1$. The distribution is concentrated below $F_1 = 0.2$, and no completion exceeds $F_1 = 0.5$, indicating little evidence that the model has memorized these headlines. The red line marks the mean ($0.093$) and the shaded band the $0.1$--$0.3$ non-memorized range reported by \citet{carlini2023quantifying}.}
\label{fig:uw_contam}
\end{figure}

\section{Design and Deployment Considerations for LLM Surrogates}\label{sec:design}

Using LLMs to generate surrogates introduces several novel design and deployment considerations that affect whether they can identify human ATEs. We now discuss the role of these considerations and provide some high-level guidance based on our theoretical results and findings from the literature.

\subsection{Model training}\label{sec:model_training}

How the LLM is built and trained determines whether surrogacy (Assumption~\ref{as:surrogacy}) can hold. Consider three representative scenarios:

\begin{enumerate}
    \item \textbf{Foundational model case.} The model relies only on generic knowledge of user behavior, generating artificial responses from an abstract understanding of how the content $X$ and the treatment $W$ map to the proxied outcome $Y^*$. Assumption~\ref{as:surrogacy} is unlikely to hold unless the LLM's generic knowledge ac\-cu\-rate\-ly captures both baseline user behavior patterns and the treatment effect of interest.

    \item \textbf{Product-specific training without treatments.} An alternative approach is to train the LLM on user behavior data from a particular setting, such as the specific product, website or app in question, but in the absence of any treatments or without encoding treatment assignments in the data. Practically, this means a company fine-tunes the LLM on historical logs of data, for instance from the production experience served to users. This approach will tend to produce LLM responses that better represent user patterns, but the lack of explicit or encoded treatment variation implies that the LLM will not be encouraged to learn treatment effects. In other words, $Y^*$ is expected to be more aligned with $Y$ than under (1), but not with $Y(w)$. Hence, Assumption~\ref{as:surrogacy} is still likely to be violated in a particular application.

    \item \textbf{Product-specific training with treatments.} Finally, we can train an LLM on data from specific settings with treatment variation and treatment assignment encoded. For instance, we can fine-tune an LLM separately on user data from a control group and a treatment group, using identical prompting and metrics for measuring alignment (e.g., a distributional divergence measure; c.f.\ Section~\ref{sec:perfect}). In this case, the model has a greater chance of satisfying the surrogacy assumption, particularly when the new intervention for which to estimate the ATE is similar to treatments encoded in the data used to fine-tune the LLM.

\end{enumerate}

\subsection{Prompting}\label{sec:prompt}

So far, we have treated the prompt $I$ as fixed. In practice it must be specified, and prompt choice may have a substantial impact on whether the identifying assumptions hold, as LLM outputs are well documented to be sensitive to prompt wording \citep[e.g.,][]{gao2025scylla}. For a fixed LLM $M=m$, there may exist prompts under which the identifying assumptions are approximately satisfied, and others under which they fail.

Prompt choice can therefore be viewed as selecting a data-gen\-er\-at\-ing mechanism for $Y^*$, directly affecting whether Surrogacy and Comparability hold. Prompt optimization and its consequences should therefore be part of the process of A/B testing with LLMs.

\subsection{Temperature}\label{sec:temperature}

The decoding temperature controls the entropy of the conditional distribution $F(\cdot \mid W, X)$ from which the LLM surrogates are drawn \citep[see, e.g.,][]{holtzman2020curious}. At zero temperature $F$ is degenerate at the mode, while as temperature increases the distribution approaches a uniform over the token vocabulary.

Temperature thereby affects the per-draw noise variance $\sigma_\varepsilon^2$ in the surrogate, and along with that the reliability ratio $R = \mathrm{Var}(\theta \mid X)/(\mathrm{Var}(\theta \mid X) + \sigma_\varepsilon^2)$ of Proposition~\ref{prop:attenuation}. Two effects are in opposition here: Too low a temperature collapses $F$ toward its mode and can shrink the signal variance $\mathrm{Var}(\theta \mid X)$, while too high a temperature inflates $\sigma_\varepsilon^2$. The optimal temperature therefore maximizes $R$, which is equivalent to minimizing the noise-to-signal ratio $\lambda_w$. With $K > 1$ draws the noise enters as $\sigma_\varepsilon^2/K$, so a larger $K$ permits a higher temperature at the same reliability.

\subsection{Long-term outcomes}

One limitation with LLM-based A/B testing is that many organizations are ultimately interested in long-term outcomes, as evidenced by the growing body of literature on the topic \citep[e.g.,][]{Athey2025, Imbens2025, Yang2024}. The current state of LLM technology makes it conceivable that LLMs can produce short-term outcomes, such as having clicked a button, that are sufficiently aligned with human preferences. However, estimating long-term outcomes---what organizations truly care about---from short-term outcomes in experiments is already challenging with humans alone, and adding LLMs only compounds the difficulty.

Still, classical surrogates that aim to predict long-term outcomes from short-term proxies \citep{athey2019using, kallus2020role, Athey2025} offer an important perspective in making LLM-based A/B testing more aligned with organizational goals. Our framework clarifies that using LLMs to A/B test long-term effects requires a two-step surrogate approach: LLM-based metrics as surrogates for short-term outcomes, and short-term outcomes as surrogates for long-term outcomes.

Consequently, decisions based on LLM-generated outcomes align with long-term objectives only under a compounded set of surrogate assumptions, which are even stricter and more difficult to justify than those required for standard surrogate settings.

\subsection{Pilot allocation}\label{sec:pilot}

A critique of the calibration-plus-diagnostics workflow of Sections~\ref{sec:surrogacy} and~\ref{sec:evaluation} is that a human pilot study partially goes against the intent of LLM-based A/B testing, which is to conserve testing capacity and avoid exposing users to potentially subpar treatments. Taking the argument to the limit, a sufficiently large pilot study is just an ordinary A/B test, and so the LLM has added nothing.

However, this critique misses that the decision to run a pilot, and how large to make it, involves a risk--reward trade-off. Without a pilot, the experimenter inevitably relies on the assumptions of Section~\ref{sec:surrogacy} and the subset of diagnostics from Section~\ref{sec:evaluation} that can be evaluated on historical data alone. They thereby face the risk that the LLM A/B test passes these diagnostics yet produces an ATE that leads to the wrong decision. In that case, the experimenter incurs the loss of deploying the wrong treatment. This loss must be (weakly) greater than the loss incurred by running the pilot study in order for abstaining from a pilot to be optimal. With a small pilot, the experimenter obtains a bounded amount of trusted evidence that the LLM A/B test is calibrated to the new treatment of interest, and the experimenter may choose to increase the pilot size to weigh the value of that information against the cost of human A/B testing.

To make this trade-off concrete, suppose the experimenter must decide whether to deploy a candidate treatment to a population of $N$ users on the basis of an estimate of the human ATE $\tau$, and incurs an expected per-user loss $\ell$ whenever the deployed treatment is inferior by more than a decision-relevance threshold $\kappa > 0$. Let $c$ be the per-user cost of pilot exposure, $n_p$ the pilot size, and $\hat\tau_{\mathrm{comb}}(n_p)$ the precision-weighted combination of $\hat\tau$ and the pilot estimate, which reduces to $\hat\tau$ at $n_p = 0$. The expected total loss is
\begin{equation}\label{eq:pilot_loss}
    L(n_p) = n_p\, c + N\, \ell \cdot \Pr\!\big[\,\big|\hat\tau_{\mathrm{comb}}(n_p) - \tau\big| > \kappa\,\big],
\end{equation}
balancing the cost of the pilot against the expected cost of a wrong deployment, and the experimenter sets $n_p^* \in \arg\min_{n_p \ge 0} L(n_p)$.

This shows that the marginal value of a pilot study, and hence $n_p^*$, increases with three quantities. The first is the cost ratio $\ell/c$: the costlier a wrong deployment relative to a pilot user, the more each pilot observation is worth. The second is the width of the sensitivity bound $2B(\mathrm{TV}_0 + \mathrm{TV}_1)$ of Proposition~\ref{prop:sensitivity}, as a wider bound leaves $\hat\tau$ consistent with a larger neighborhood of $\tau$. Therefore, the pilot study's evidence about $\tau$ contributes more. The third is the intrinsic novelty of the treatment relative to those on which $\hat\mu$ was trained. This follows because the surrogacy falsification test of Section~\ref{sec:evaluation} can only be applied to historical data.

As such, the pilot fraction can be optimized as a design parameter given the experimenter's loss function and the results from the diagnostics. The corner solutions are then straightforward: a pilot fraction of zero suits low-stakes treatments with strong diagnostics, whereas a fraction close to one may be more appropriate in high-stakes settings with novel treatments for which the LLM serves only as an idea generator.

When a human pilot is available for the target treatment, the calibrated estimate can be augmented with a residual correction from the pilot outcomes, yielding a prediction-assisted estimator in the spirit of \citet{angelopoulos2023predictionpowered}. Such an augmentation uses the pilot to correct bias in the calibrated estimator without discarding the precision gains from the LLM sample.

\section{Discussion and Future Directions}\label{sec:discussion}

We introduced a surrogacy-based framework for using LLM-generated outcomes to estimate human causal effects. Adapting sur\-ro\-gate-end\-point theory to AI-gen\-er\-at\-ed participants, it makes explicit the assumptions that need to hold for identifying the causal effect of interest, how to carry out estimation, and what are the roles and con\-se\-quences of novel challenges introduced by using LLMs for this purpose, including their inherent sto\-chas\-tic\-ity, the role of prompting, temperature, pilot studies on real users, as well as limitations to long-term outcomes.

A central takeaway is that A/B testing on LLM responses yields correct results only by assumption, whereas A/B testing on humans is correct by design. Only a subset of the necessary assumptions can be checked on historical data, and even those can only falsify a violation rather than confirm that the assumption holds. It is therefore not possible to verify that LLM responses are valid surrogates for any truly novel treatment, implying that the promise is least justified precisely where it would be most beneficial. Practitioners should therefore treat diagnostics as a prerequisite and view LLM outcomes as complementary to, rather than substitutes for, actual experiments.

Several directions remain open for future research. The framework assumes no interference and a single outcome, whereas many experiments involve interacting users or multiple metrics; relaxing these is a natural next step. Combining our estimator with inference methods for model-generated labels would yield confidence intervals that account for both estimation error in $\hat\mu$ and the stochasticity of $Y^*$. Finally, extending to long-term outcomes, where LLMs predict intermediate outcomes that themselves proxy long-term effects, and multi-agent settings raise new challenges that compound identification requirements and estimation uncertainty.

\begin{acks}
    We thank Kamil Ciosek for helpful comments on the methodology, Kyle Kretschman for feedback on the writing, and attendees of the inaugural Workshop on Experimentation for Decision-Making at Columbia Business School for discussions that helped us refine the positioning and contribution.
\end{acks}

\bibliographystyle{ACM-Reference-Format}
\bibliography{library}

\clearpage
\end{document}